\newif\ifsubmission
\newif\ifcomment
\newif\ifcamera
\author{Anonymous Authors}
\author{
  Yotam Kenneth-Mordoch
  \qquad
  Robert Krauthgamer%
  \thanks{ The Harry Weinrebe Professorial Chair of Computer Science.
    Work partially supported by the Israel Science Foundation grant \#1336/23,
    by the Israeli Council for Higher Education (CHE) via the Weizmann Data Science Research Center,
    and by a research grant from the Estate of Harry Schutzman.
  }
  \\ Weizmann Institute of Science
  \\ \texttt{\{yotam.kenneth,robert.krauthgamer\}@weizmann.ac.il}
}
\newtheorem{theorem}{Theorem}[section]
\newtheorem{lemma}[theorem]{Lemma}
\newtheorem{definition}[theorem]{Definition}
\newtheorem{claim}[theorem]{Claim}
\newcommand{\Exp}[2]{\mathbb{E}_{ #2}\left[ #1 \right]}
\newcommand{\Probability}[1]{\Pr\left[ #1 \right]}
\newcommand{\tO}{\tilde{O}}
\DeclareMathOperator{\polylog}{polylog}
\providecommand{\set}[1]{{\{#1\}}}
\newcommand{\eqdef}{\coloneqq}
\title{Simple Algorithms for Fully Dynamic Edge Connectivity}
\begin{document}

\maketitle

\begin{abstract}
  In the fully dynamic edge connectivity problem,
  the input is a simple graph $G$ undergoing edge insertions and deletions,
  and the goal is to maintain its edge connectivity, denoted $\lambda_G$.
  We present two simple randomized algorithms solving this problem.
  The first algorithm maintains the edge connectivity in worst-case update time $\tilde{O}(n)$ per edge update,
  matching the known bound but with simpler analysis.
  Our second algorithm achieves worst-case update time $\tilde{O}(n/\lambda_G)$
  and worst-case query time $\tilde{O}(n^2/\lambda_G^2)$,
  which is the first algorithm with worst-case update and query time $o(n)$
  for large edge connectivity, namely, $\lambda_G = \omega(\sqrt{n})$.
\end{abstract}

\section{Introduction}
\label{sec:introduction}
Finding the edge connectivity of a graph is a fundamental algorithmic problem in graph theory.
The \emph{edge connectivity} of a graph $G=(V,E)$, denoted $\lambda_G$, is defined as the minimum number of edges that need to be removed from $G$ in order to disconnect it;
this is equivalent to the value of a global minimum cut in $G$.
It has long been known that it is possible to find the edge connectivity of a graph in $\tO(m)$ time, where $n=|V|,m=|E|$ and $\tO(\cdot)$ hides polylogarithmic factors in $n$ \cite{Karger00,KT19,HRW20,MN20,GMW20}.

We consider the problem of maintaining the edge connectivity of a graph $G=(V,E)$ in a fully dynamic setting, where the input is a \emph{dynamic graph} presented as a fixed set of vertices and a sequence of edge insertions and deletions, and the goal is to report the edge connectivity after each update. 
This problem has a rich history: it was originally studied for constant values of edge connectivity (namely $\lambda_G\in \{1,2,3,4\}$) see for example \cite{EGIN97,HLT01,KKM13,CGLNPS20};
for the case $\lambda_G=1$ there exists a randomized algorithm using worst-case update time $\polylog (n)$ and a deterministic algorithm using amortized update time $\polylog (n)$ \cite{HLT01,KKM13}.
These results are complemented by a deterministic algorithm for the case $\lambda_G = \log^{o(1)}n$, which achieves worst-case update time $n^{o(1)}$ \cite{JST24}.

For the case of general $\lambda_G$, there exist deterministic algorithms that maintain the edge connectivity in worst-case time $\tO(\min\left(\lambda_{\max}^{5.5}\sqrt{n},m^{1-1/12},m^{11/13}n^{1/13},n^{3/2} \right))$ per update, where $\lambda_{\max}$ is the maximum edge connectivity of $G$ throughout the algorithm's execution \cite{Thorup07,GHNSTW23,VC25}.
Allowing for randomization, there exists a randomized algorithm that maintains the edge connectivity in worst-case time $\tO(n)$ per update \cite{GHNSTW23}.
Finally, for approximate maintenance which asks to maintain a $(1+\epsilon)$-approximation of the edge connectivity \cite{TK00,Thorup07,EHL25}, one can achieve a $(1+o(1))$-approximation with worst-case update time $n^{o(1)}$ \cite{EHL25}. 

\subsection{Contributions}
We focus on general edge connectivity, and present two new randomized algorithms that achieve worst-case update time $\tO(n)$ and $\tO(n^2/\lambda_G^2)$ respectively.
Both algorithms are based on the same high-level idea, which is an efficient and simple implementation of a contraction algorithm that preserves the edge connectivity.
Our first algorithm matches the result of \cite{GHNSTW23} but has a simpler and more intuitive proof, while our second algorithm offers improved performance for dense graphs.
Throughout, update time (of an algorithm) refers to the worst-case time per edge update unless stated otherwise.
We say that a randomized algorithm succeeds with high probability if its probability of success is at least $1-1/n^c$ for any constant $c>0$.
\begin{theorem}
    \label{theorem:main-matching-GHNSTW23}
    There exists a fully dynamic algorithm that, given an unweighted dynamic graph $G$ on $n$ vertices, maintains the edge connectivity of $G$ with update time $\tO(n)$.
    The algorithm is randomized and succeeds with high probability.
\end{theorem}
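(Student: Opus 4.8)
The plan is to maintain, alongside $G$ itself, a collection of $\Theta(\log n)$ independent randomly \emph{contracted} copies of $G$, each a multigraph on far fewer vertices that, with constant probability, has exactly the same edge connectivity as $G$. After every update we recompute the global minimum cut of each copy and output the minimum of these values together with the current minimum degree $\delta_G$ (which upper bounds $\lambda_G$ and is trivial to maintain, and which equals $\lambda_G$ precisely when the optimal cut is a singleton). The contraction is the \emph{random $2$-out contraction}: for each vertex $v$ independently sample two of its incident edges uniformly at random, let $S$ be the graph on $V$ consisting of all sampled edges, and contract each connected component of $S$ into a single supernode. Two facts drive the argument: (i) a single $2$-out contraction preserves the global minimum cut with probability $\Omega(1)$; and (ii) with high probability the resulting multigraph has $O(n/\delta_G)$ supernodes and $\tO(n)$ edges (counting multiplicity). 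Since the $\Theta(\log n)$ copies are independent, a Chernoff bound shows that at each fixed step some copy preserves $\lambda_G$ with probability $1-1/\poly(n)$, and a union bound over the $\poly(n)$ updates gives correctness with high probability.

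Given the copies, each update costs $\tO(n)$ for the answer computation: a global minimum cut of an $\tO(n)$-edge multigraph is computed in $\tO(n)$ time by a near-linear-time minimum cut algorithm (e.g.\ Karger's), run in parallel on all $\Theta(\log n)$ copies. What remains is to maintain each copy in $\tO(n)$ time per update. We store, for each vertex and each copy, its two sampled incident edges ($O(n)$ data in total), supported by adjacency lists that allow sampling a uniformly random incident edge in $O(1)$ time; since $S$ has only $O(n)$ edges its connected components — hence the supernode identity of every vertex — can be recomputed from scratch in $O(n)$ time, and the contracted multigraph itself has $\tO(n)$ edges. An edge update that touches no sampled edge only changes the multiplicity of a single edge in each copy and is handled in $O(1)$ time after a supernode lookup. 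An update that deletes or (via the resampling needed to keep the $2$-out distribution correct on an insertion) replaces a sampled edge changes $S$ by $O(1)$ edges, which changes the supernode partition by a merge and/or a split, and this is where the work lies.

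The main obstacle is exactly updating a contracted copy after a change to the supernode partition \emph{without} re-reading all of $G$ (a full rescan costs $\Theta(m)$, which is prohibitive once $m\gg n$). A merge of two supernodes only requires combining their parallel-edge multiplicities, which is within budget since the contracted graph has $\tO(n)$ edges. The delicate case is a \emph{split}, where a supernode $C$ disconnected by the deletion of a single sampled edge must have its incident edges re-partitioned between the two pieces $C_1,C_2$: re-bucketing the smaller side $C_1$ costs $O\big(\sum_{v\in C_1}\deg_G(v)\big)$, which is not $\tO(n)$ in general. Bounding this is the crux, and is handled by combining the structural constraints on $C_1$ forced by the $2$-out sample (essentially all sampled edges out of $C_1$'s vertices stay inside $C_1$) with the randomness of the sample: against the (oblivious) update sequence, the event that such a disconnecting deletion isolates a component carrying more than $\tO(n)$ edge-incidences has probability $1/\poly(n)$ at each step, so by a union bound it never occurs throughout the execution with high probability, and the per-update maintenance cost is $\tO(n)$. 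Combining the $\tO(n)$ maintenance cost with the $\tO(n)$ minimum-cut computations, and correctness from facts (i)–(ii), yields the theorem.
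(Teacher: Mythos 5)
There is a genuine gap at the step you yourself identify as the crux: the claim that, against an oblivious adversary, ``the event that a disconnecting deletion isolates a component carrying more than $\tO(n)$ edge-incidences has probability $1/\poly(n)$ at each step.'' This is false. Consider two cliques $A,B$ on $\Theta(n)$ vertices joined by a path $a\!-\!x\!-\!y\!-\!b$ where $x,y$ have degree $2$ (or, to push the minimum degree up to any $\delta$, attach $x$ to $\delta-1$ vertices of $A$ and $y$ to $\delta-1$ vertices of $B$). In the $2$-out sample, $x$ and $y$ sample the edge $(x,y)$ with probability $1$ in the degree-$2$ case, and with probability $\Theta(1/\delta)$ in general, while the $2$-out samples inside $A$ and inside $B$ are each connected with high probability. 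Hence with probability $1-o(1)$ (resp.\ $\Theta(1/\delta)$) the whole graph is one supernode in which $(x,y)$ is a bridge of the sampled graph $S$; when the adversary deletes $(x,y)$, the supernode splits into an $A$-side and a $B$-side, and the smaller side has total $G$-degree $\Theta(n^2)$. So a single update forces $\Theta(n^2)$ re-bucketing work with constant (not $1/\poly(n)$) probability, and since the adversary can re-insert and re-delete such edges, no union bound can rescue a \emph{worst-case} $\tO(n)$ update bound, which is what the theorem promises (the paper fixes worst-case as the default meaning of update time). A secondary unsupported step is fact (ii): the known $2$-out contraction guarantee bounds only the number of supernodes by $O(n/\delta_G)$; it does not give an $\tO(n)$ bound on the number of multigraph edges of the contracted graph, which you need both for the $\tO(n)$-time minimum-cut recomputation and for cheap merges, and which you assert without proof.

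The difficulty you hit is precisely the one the paper's design avoids. Maintaining an \emph{explicit} contraction whose supernodes are connected components of a random subgraph is fragile, because one edge update can split a component with huge total degree (this is why the prior work you are implicitly following recovers the contraction via linear sketches rather than maintaining it). The paper instead uses a $\tau$-star contraction: centers $R$ are sampled with probability $\Theta(\log n/\tau)$, and each non-center vertex is independently assigned to a uniformly random neighboring center via a stable dynamic sampler. An edge update changes the assignment of at most one vertex $v$, with probability $O(\log n/\max\{d_G(v),\tau\})$, and repairing it costs only $O(d_G(v))=O(n)$ worst case and touches only $\tO(n/\tau)$ contracted edges; there are no component splits. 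The $\tO(n)$ bound on the number of edges fed to the static minimum-cut routine is then obtained not from the contracted multigraph itself but by dynamically maintaining a maximal $\Theta(\delta_G)$-packing of spanning forests of the contracted graph (Lemmas~\ref{lemma:dynamic-star-contraction} and~\ref{lemma:forest-packing}). To repair your argument you would need either a correct worst-case procedure for handling splits (or an amortized-to-worst-case conversion that your component-based contraction does not obviously admit), or to switch to a contraction with the ``one vertex changes at a time'' property, which is essentially the paper's route.
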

Our second algorithm improves on the result of \cite{GHNSTW23} for large edge connectivity, namely when $\lambda_G = \tilde{\omega}(\sqrt{n})$.
Our result is actually stronger, since it is parametrized by the minimum degree of the graph at the time of the update, $\delta_G$, which is only larger than the edge connectivity $\lambda_G$ and hence it implies an update time of $\tO(n^2/\lambda^2_G)$.
Therefore, \Cref{theorem:main-better-for-larger} improves on \cite{GHNSTW23} for large edge connectivity, i.e. $\lambda_G \geq \tilde{\omega}(\sqrt{n})$, and is in fact the first algorithm to achieve worst-case update time $o(n)$ for large edge connectivity.
Moreover, the improvement applies also to graphs with large minimum degree but small edge connectivity.
The improvement is even larger when separating the update and query times, i.e. if the algorithm only needs to report the edge connectivity occasionally.
\begin{theorem}
    \label{theorem:main-better-for-larger}
    There exists a fully dynamic algorithm that, given an unweighted dynamic graph $G$ on $n$ vertices, maintains the edge connectivity of $G$ with update time $\tO(n/\delta_G)$ and query time $\tO(n^2/\delta_G^2)$.
    The algorithm is randomized and succeeds with high probability.
\end{theorem}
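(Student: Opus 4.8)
The plan is to reduce the task to maintaining the edge connectivity of $O(\log n)$ multigraphs, each on only $\tilde{O}(n/\delta_G)$ vertices, obtained from $G$ by the same style of contraction primitive that underlies \Cref{theorem:main-matching-GHNSTW23} (pushed far enough to also shrink the vertex count), while handling singleton cuts separately through the minimum degree.

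Singleton cuts are easy: maintain $\delta_G$ in $O(\log n)$ time per update with a bucket structure over vertex degrees. Since the cut $(\{v\},V\setminus\{v\})$ has value $\deg_G(v)$, we have $\lambda_G=\min(\delta_G,\mu_G)$, where $\mu_G$ is the least value of a non-trivial cut. A one-line counting argument shows that only ``balanced'' non-trivial cuts matter: if $(S,\overline{S})$ is a non-trivial cut of value at most $\delta_G$, then $|S|\delta_G \le \sum_{v\in S}\deg_G(v) \le |S|(|S|-1)+\delta_G$, which, dividing by $|S|-1\ge 1$, forces $|S|\ge\delta_G$, and symmetrically $|\overline{S}|\ge\delta_G$. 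So it suffices to recover $\mu_G$ among non-trivial cuts that have at least $\delta_G$ vertices on each side.

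Now run the contraction primitive $\Theta(\log n)$ times independently to obtain multigraphs $H_1,\dots,H_t$. The two properties I need are: (i) every $H_i$ has $\tilde{O}(n/\delta_G)$ vertices; and (ii) any fixed non-trivial cut of value at most $\delta_G$ is preserved — appears as a cut of the same value — in a single $H_i$ with probability $\Omega(1)$. Because a contraction can only raise edge connectivity we always have $\lambda_{H_i}\ge\lambda_G$, while a minimum non-trivial cut survives in some $H_i$ w.h.p.; hence $\lambda_G=\min\bigl(\delta_G,\ \min_i\lambda_{H_i}\bigr)$, and a union bound over the polynomially many update steps — at each of which only one cut must survive — yields the high-probability guarantee. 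On a query I would compute each $\lambda_{H_i}$ with the recursive random-contraction (Karger--Stein) algorithm, which runs in $\tilde{O}(N^2)$ time on an $N$-vertex multigraph independently of how many parallel edges it has, provided $H_i$ is stored in aggregated weighted-adjacency form; with $N=\tilde{O}(n/\delta_G)$ this is $\tilde{O}(n^2/\delta_G^2)$ over all $t$ copies. Storing each $H_i$ as its $N\times N$ aggregated adjacency matrix also makes an edge update to $G$ trivial \emph{as long as the contraction partition is unchanged}: one only increments or decrements the single matrix entry indexed by the super-vertices of the two endpoints (a loop when they coincide is dropped).

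The main obstacle is maintaining the contraction itself — the partition of $V$ into super-vertices — in $\tilde{O}(n/\delta_G)$ worst-case time per update while keeping it tuned to the current graph: the counting bound above describes cuts of the \emph{current} $G$, which drift as $G$ changes, so a partition fixed once and for all cannot preserve later cuts, yet recomputing it from scratch costs $\Omega(m)$. My approach would be to maintain the contraction incrementally — the primitive selects, for each vertex, a bounded number of incident ``keep'' edges and contracts the connected components of the resulting sparse subgraph, so a single edge update changes that sparse subgraph in $O(1)$ places, and its components can be tracked by a dynamic-connectivity structure in $\polylog n$ time per change. The delicate point is that when a component splits or merges the corresponding super-vertex of each $H_i$ must be split or merged, and done naively this rereads all $G$-edges incident to the component; keeping this within $\tilde{O}(n/\delta_G)$ requires a contraction whose super-vertices have small induced degree (so the incident edges can be rescanned cheaply) together with a de-amortization of that work across updates by a standard two-copy scheme. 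Proving that a single contraction primitive simultaneously achieves (i) shrinkage to $\tilde{O}(n/\delta_G)$ vertices, (ii) the $\Omega(1)$ cut-preservation probability, and (iii) this kind of cheap incremental maintenance is the heart of the argument.
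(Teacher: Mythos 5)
Your high-level skeleton matches the paper's: handle trivial cuts via the minimum degree, maintain a contraction with $\tilde{O}(n/\delta_G)$ super-vertices as an aggregated weighted graph, boost with $\Theta(\log n)$ independent copies, and answer a query by running a static min-cut routine on the small weighted graph in $\tilde{O}(n^2/\delta_G^2)$ time. But the statement you yourself flag as ``the heart of the argument'' --- a contraction primitive that simultaneously (i) shrinks to $\tilde{O}(n/\delta_G)$ vertices, (ii) preserves a non-trivial minimum cut with constant probability, and (iii) can be maintained in worst-case $\tilde{O}(n/\delta_G)$ time per edge update --- is never constructed or proven, and it is precisely the paper's main technical content for this theorem (\Cref{lemma:dynamic-lazy-star-contraction}). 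Your maintenance sketch does not close this gap: the expensive event is not tracking components of a sparse ``keep''-edge subgraph, but the fact that when a vertex $v$ changes the super-vertex it is assigned to, all $d_G(v)$ of its incident edges must be reassigned in the aggregated graph, which can cost $\Theta(n)$ in the worst case --- far above the $\tilde{O}(n/\delta_G)$ budget. A ``standard two-copy'' periodic rebuild does not obviously repair this either, since a rebuild costs $\Omega(m)$ and, spread over updates, leaves the contraction stale with respect to the drifting graph (and drifting $\delta_G$), exactly the tuning problem you raise but do not resolve.

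The paper's resolution has three ingredients you would need to supply. First, the specific primitive is star contraction (\Cref{theorem:original-star-contraction}): each non-center vertex of degree at least $\tau$ contracts to a uniformly random center neighbor, chosen by a \emph{stable} dynamic uniform sampler (\Cref{lemma:stable-uniform-dynamic-sampling}); stability gives \Cref{claim:dynamic-star-contraction-change-rate}, namely that an edge update changes a vertex's assigned center only with probability $O(\log n/\max\{d_G(v),\tau\})$, so the expensive reassignment event is rare and the amortized cost is polylogarithmic. Second, the worst case is tamed not by rebuilding but by queueing the $d_G(v)$ edge reassignments and processing only $\tilde{O}(n/\delta_G)$ of them per update; \Cref{claim:dynamic-lazy-star-contraction} shows the queue is nonempty (hence the contraction temporarily incomplete) at any fixed time with probability only $O(1/\log^2 n)$, and incompleteness is detectable, so incomplete instances can simply be ignored at query time at the cost of a constant factor in success probability. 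Third, because the relevant threshold $\delta_G$ changes over time, one maintains $O(\log n)$ instances at geometric scales $\tau=2^i$ and queries the instance with $2^{i^*}\le\delta_G$, rather than independent copies of a single contraction tuned to the current $\delta_G$ as in your write-up (your $\Theta(\log n)$ copies serve only probability amplification, which is still needed on top). Without these pieces your argument establishes the reduction but not the claimed worst-case update time, so as it stands the proof is incomplete.
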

We now provide several details about the above results.
The first one is that both results can report not only the exact edge connectivity, but also the edges that form a minimum cut, which incurs an additional $\tO(\lambda_G)$ time per query.
When reporting only the value of the edge connectivity, the algorithms are adversarially-robust, i.e. they work against an adaptive adversary.
This is because the edge connectivity is a deterministic function of the graph and hence, as long as the algorithm returns the correct result, reporting it does not leak any information to the adversary.
In contrast, when reporting the edges of the minimum cut, our algorithms only work against an oblivious adversary.
Since in both cases the adversary lacks knowledge of the algorithms' internal state, our analysis only considers the case of an oblivious adversary.

The second detail is that it is possible to combine the results, to obtain an algorithm with update time $\tO(\min\set{n, n^2/\delta_G^2})$.
A comparison of existing results with our work is given in Figure~\ref{figure:summary-of-existing-results}.
As explained above we use the fact that the update time of \Cref{theorem:main-better-for-larger} can be stated as $\tO(n^2/\lambda_G^2)$.

\begin{figure}
    \label{figure:summary-of-existing-results}
    \centering
    \input{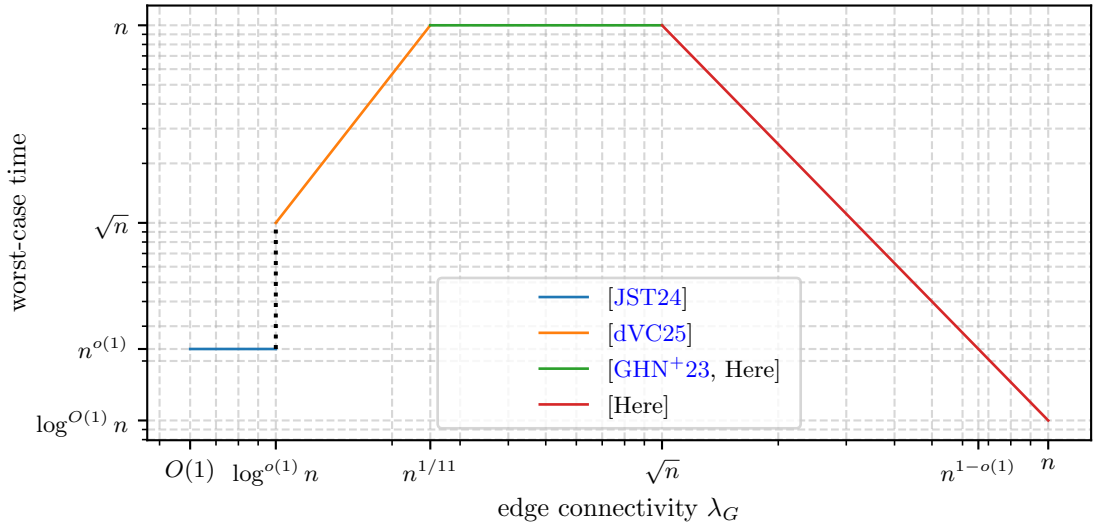}
    \caption{Schematic comparison of existing algorithms for dynamic edge connectivity.}
\end{figure}

\subsection{Future Work}
The main open question remaining is to achieve worst-case update time $o(n)$ for all edge connectivity values.
Prior work achieves this for edge connectivity $\lambda_G = o(n^{1/11}/\polylog n)$ \cite{Thorup07,JST24,VC25}.
Our results complement these results and achieve it for large edge connectivity, namely $\lambda_G = \tilde{\omega}(\sqrt{n})$.
Perhaps surprisingly, the remaining case is that of intermediate connectivity, $\lambda_G\in [n^{1/11},n^{1/2}]$.

\section{Proof of Main Theorems}
\label{sec:technical-overview}
A powerful tool for finding a minimum cut that has been used extensively in past work is a family of contraction procedures first introduced by Kawarabayashi and Thorup \cite{KT19}, which we call \emph{KT-style contractions} \cite{KT19,HRW20,GNT20,Saranurak21,AEGLMN22,KK25}.
In general, KT-style contractions preserve the minimum cut value of an input graph $G$ while reducing the number of vertices to $\tO(n/\delta_G)$, where $\delta_G$ is the minimum degree in $G$.
The edge connectivity can then be more efficiently computed on a smaller contracted graph $G'=(V',E')$.
This general approach has also been employed to design an efficient algorithm for dynamic edge connectivity \cite{GHNSTW23}.
Specifically, the algorithm leverages the $2$-out contraction technique of \cite{GNT20} to maintain a contracted multigraph $G'$ of the input graph $G$ \cite{GHNSTW23}, and constructs a maximal $\delta_G$-packing of forests in $G'$.

\begin{definition}[Maximal $k$-Packing of Forests]
    Given a (multi)graph $G$, a maximal $k$-packing of forests is a set of edge-disjoint forests $T_1,\ldots,T_k$ of $G$ that are maximal, i.e. for every $i\in [k]$ and $e\in E\setminus \cup_j T_j$, the edge set $T_i \cup \{e\}$ contains a cycle.
    A maximal $k$-packing of forests in a multigraph includes at most one parallel edge in each forest.
\end{definition}

The main idea of \cite{GHNSTW23} is that the edge connectivity of $G$ is at most $\delta_G$, and therefore if a global minimum cut of $G$ is preserved in $G'$, then all its edges are contained in a maximal $\delta_G$-packing of forests in the multigraph $G'$.
Therefore, the edge connectivity of $G'$ (and hence of $G$) is equal to the edge connectivity of $H=(V',T_1\cup\ldots\cup T_{\delta_G})$, where $\set{T_i}_{i=1}^{\delta_G}$ is a maximal $\delta_G$-packing of forests in $G'$.
Furthermore, the number of edges in $H$ is at most $\tO(n)$ since each forest $T_i$ has at most $O(|V'|)=\tO(n/\delta_G)$ edges, and hence the edge connectivity of $H$ can be computed in time $\tO(n)$ using some static algorithm for edge connectivity.

From a simplicity perspective, the main drawback of \cite{GHNSTW23} is that it does not maintain an explicit contraction of the graph, but rather only its connected components.
It then recovers the forest packing using linear-sketching techniques, which complicates the algorithm.
Another drawback of this approach is that linear sketching only recovers unweighted edges of the contracted multigraph $G'$.
However, by reinterpreting the contracted multigraph as a weighted graph (merging parallel edges), we can take advantage of the fact that when $\delta_G \ge \omega(\sqrt{n})$ the graph $G'$ becomes very small.%
\footnote{Such a reinterpretation is natural and was used before in the context of KT-style contractions, e.g. \cite{KK25}.}
In fact, the number of edges in $G'$ is $\tO (n^2/\delta_G^2)$, which enables solving the edge connectivity problem in $\tilde{o}(n)$ worst-case time per edge update when $\delta_G \ge \omega(\sqrt{n})$.

Our algorithms follow the general recipe of \cite{GHNSTW23}, with the main difference being that our contraction algorithm maintains an explicit KT-style contraction as a weighted graph.%
\footnote{To recover the edges of a minimum cut we can maintain a list of the edges of $G$ that are mapped to each edge of $G'$ and report the edges corresponding to the minimum cut in $G'$.}
This both simplifies the algorithm and improves the update time whenever $\delta_G \ge \omega(\sqrt{n})$.
Specifically, we leverage the $\tau$-star contraction procedure of \cite{AEGLMN22,KK25}, which was originally designed for finding a global minimum cut in the cut-query model.
The procedure works as follows.
Sample a set of \emph{center vertices} $R\subseteq V$ by including each vertex $v\in V$ independently with probability $p=O(\log n/\tau)$.
Let $H=\set{v\in V\setminus R \mid d_G(v)\ge \tau}$ be the set of non-center vertices with degree at least $\tau$,
where throughout $d_G(v)$ is the degree of $v$ in $G$.
Then, for every $v\in H$, uniformly sample a vertex $r\in N_G(v)\cap R$, where $N_G(v)$ is the neighborhood of $v$ in $G$, and contract the corresponding edge $(v,r)$, keeping parallel edges, which yields a contracted multigraph $G'$.
The main guarantee of this procedure is that if $G$ has a minimum cut that is non-trivial, i.e., not composed of a single vertex, then $\lambda_G=\lambda_{G'}$ with constant probability.
An illustration of the $\tau$-star contraction is given in \Cref{fig:star-contraction}.
The following theorem states this formally.
\begin{theorem}[\cite{AEGLMN22,KK25}]
    \label{theorem:original-star-contraction}
    Let $G=(V,E)$ be an unweighted graph on $n$ vertices with at least one minimum cut that is non-trivial.
    Then, $\tau$-star-contraction with $p = 800\log n/\tau$ yields a contracted graph $G'$ such that,
    \begin{enumerate}
        \item if $\tau \le \delta_G$, then with probability at least $1-1/n^4$ all vertices in $V\setminus R$ are contracted and $G'$ has at most $\tO(n/\tau)$ vertices, and
        \item $\lambda_{G'}\ge \lambda_G$, and equality holds with probability at least $2\cdot 3^{-13}$.
    \end{enumerate}
\end{theorem}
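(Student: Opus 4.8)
The plan is to treat the three assertions separately, spending essentially all the effort on the equality $\lambda_{G'}=\lambda_G$.

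Item (1) and the inequality $\lambda_{G'}\ge\lambda_G$ are routine. For item (1), if $\tau\le\delta_G$ then every non-center has degree at least $\delta_G\ge\tau$, so $H=V\setminus R$; a fixed vertex $v\in H$ misses $R$ among all its neighbors with probability at most $(1-p)^{d_G(v)}\le e^{-p\tau}=n^{-800}$, so a union bound over the at most $n$ vertices shows that, except with probability $n^{-799}$, every non-center is contracted into the center it selects, in which case each (super)node of $G'$ carries a distinct label from $R$ and hence $|V'|\le|R|$. Since $\E|R|=np=O(n\log n/\tau)$, a Chernoff bound gives $|R|=\tO(n/\tau)$ except with probability $n^{-4}$, and a union bound over the two failure events proves (1). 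For $\lambda_{G'}\ge\lambda_G$ I would just note that contraction is monotone on cuts: undoing the star contractions maps every cut of $G'$ to a cut of $G$ with exactly the same set of crossing edges, so $\lambda_{G'}$ is a minimum taken over a sub-collection of the cuts of $G$.

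The substance is the equality. Fix a non-trivial minimum cut $(S,\overline S)$. By the inequality above it suffices to show that, with probability at least $2\cdot 3^{-13}$, no contracted star edge crosses $(S,\overline S)$: in that event $(S,\overline S)$ descends to a cut of $G'$ of value $d_G(S)=\lambda_G$. I would condition on the sampled center set $R$. Given $R$, the set $H$ is determined, and the centers chosen by distinct vertices of $H$ are mutually independent, so $\Pr[(S,\overline S)\text{ survives}\mid R]=\prod_v(1-q_v(R))$, where $q_v(R)$ is the probability that $v$ selects a center on the far side of $(S,\overline S)$, namely $|N_G(v)\cap R\cap\overline S|/|N_G(v)\cap R|$ for $v\in S$ (symmetrically for $v\in\overline S$), understood as $0$ when $N_G(v)\cap R=\emptyset$; only boundary vertices contribute. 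Using the identity $\E[(1+\mathrm{Bin}(m,p))^{-1}]=\frac{1-(1-p)^{m+1}}{(m+1)p}$ one gets $\E_R q_v(R)\le c_v/d_v$, where $c_v$ counts the cut edges at $v$, and since $\sum_{v\in S}c_v=\sum_{v\in\overline S}c_v=\lambda_G$ while $d_v\ge\delta_G\ge\lambda_G$, the expected number of ``bad'' stars is at most $2$.

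\textbf{That expectation bound is the main obstacle:} it is too weak for a union bound, since it does not even show $\Pr[\text{some star is bad}]<1$. I would overcome it, following \cite{GNT20,AEGLMN22,KK25}, with two extra ingredients. First, minimality of the cut forces $c_v\le d_v/2$ for every vertex $v$ (move $v$ to its opposite side — possible because both sides have at least two vertices — and invoke optimality of $d_G(S)$). Consequently, whenever $|N_G(v)\cap R|$ is not pathologically small, a Chernoff bound makes $q_v(R)$ concentrate around $c_v/d_v\le\tfrac12$, so each factor $1-q_v(R)$ is bounded away from $0$, and combined with $\sum_v q_v(R)=O(1)$ this makes $\prod_v(1-q_v(R))$ a positive constant. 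Second, in the regime where $|N_G(v)\cap R|$ is small (which can push $q_v(R)$ up to $1$), the probability that $v$ is contracted at all is correspondingly small, since $\E q_v(R)\le c_v p$ as well; hence those vertices contribute little to the failure probability and a direct Markov bound suffices for them. Balancing the two regimes (the tight case being $\tau=\Theta(\lambda_G)$ up to logarithmic factors) yields a constant lower bound on the survival probability; the explicit value $2\cdot 3^{-13}$ is the one extracted in \cite{GNT20,AEGLMN22,KK25}, and I would not try to improve it. The one genuinely delicate point is making this regime split quantitative.
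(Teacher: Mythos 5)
The first thing to note is that the paper does not prove this statement at all: \Cref{theorem:original-star-contraction} is imported verbatim from \cite{AEGLMN22,KK25} (whose analysis in turn follows the $2$-out contraction argument of \cite{GNT20}), so there is no internal proof to compare yours against. Judged on its own, your treatment of item (1) and of the inequality $\lambda_{G'}\ge\lambda_G$ is correct: with $\tau\le\delta_G$ every vertex of $V\setminus R$ lies in $H$, misses $R$ in its neighborhood with probability at most $(1-p)^{\tau}\le n^{-800}$, and $|R|$ concentrates at $\tO(n/\tau)$; and since the contraction keeps parallel edges, the cuts of $G'$ are exactly the cuts of $G$ that do not separate contracted pairs, giving the inequality deterministically. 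Your setup for the equality is also the right one: fix a non-trivial minimum cut, condition on $R$, use the exchangeability of neighbors to get $\E_R q_v(R)\le c_v/d_v$, and use non-triviality to get $c_v\le d_v/2$.

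However, as you yourself flag, the quantitative heart of item (2) --- turning ``the expected number of bad stars is at most $2$'' and ``each $q_v$ is typically at most about $1/2$'' into a concrete constant lower bound on $\prod_v(1-q_v(R))$ --- is not carried out; you defer it to the very papers the theorem is attributed to. Two points in your sketch would need repair or elaboration if you wanted a self-contained argument. First, ``$\sum_v q_v(R)=O(1)$'' holds only in expectation, so you must apply Markov to the conditional sum and intersect that event with the concentration events, keeping all failure probabilities summing to strictly less than one. Second, Chernoff concentration only gives $q_v(R)\lesssim 2c_v/d_v+o(1)$, which is not bounded away from $1$ for vertices with $c_v$ close to $d_v/2$; the standard fix (and the one in \cite{GNT20,AEGLMN22}) is to observe that $\sum_v c_v/d_v\le 2$ forces there to be only $O(1)$ such ``high cut-degree'' vertices, handle them by a separate constant-probability argument, and run the product/Markov argument only over the remaining vertices. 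This regime split is exactly where the constant $2\cdot 3^{-13}$ comes from, so the gap you acknowledge is real but is the same gap the paper itself leaves to the citations; your sketch identifies the correct ingredients to close it.
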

\begin{figure}[t]
    \centering
    \includegraphics[width=0.95\textwidth]{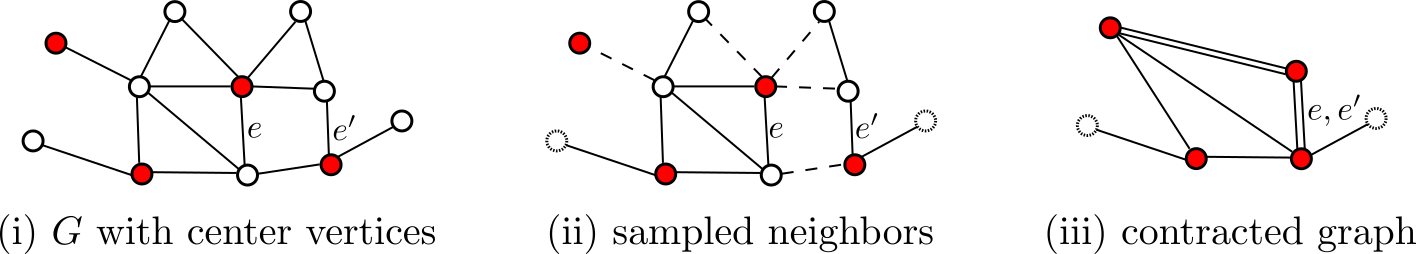}
    \caption{
    An illustration of the $\tau$-star contraction procedure with $\tau=2$.
    (i) A random set of center vertices $R$ is sampled (indicated by red).
    (ii) Each non-center vertex in $H=\{v\in V\setminus R \mid d_G(v)\ge \tau\}$ chooses a random neighbor in $R$ (indicated by dashed edges); 
    vertices not in $H\cup R$ are indicated by a striped pattern.
    (iii) The corresponding edges are contracted (keeping parallel edges) to obtain a contracted graph $G'$.
    }
    \label{fig:star-contraction}
\end{figure}
The following technical lemma provides the guarantees of the $\tau$-star contraction procedure in a fully dynamic setting.
Using this lemma our algorithms maintain a KT-style contraction of the graph $G$ as a weighted graph (by merging parallel edges) in a fully dynamic setting.
We call a contraction $G'$ of $G$ \emph{complete}, if it maps every edge of $G$ to an edge in $G'$, and otherwise it is \emph{incomplete}.%
\footnote{Notice that a incomplete contraction is not a contraction in the strict sense, since it does not map every edge of $G$ to an edge in $G'$.}
Our contraction procedure only tracks edges that are internal to the center vertices $R$, hence if a vertex $v\in V\setminus R$ is not contracted, then its incident edges are not mapped in $G'$.
Therefore, its contraction is complete only when all vertices in $V\setminus R$ are contracted.
Fortunately, this happens with high probability whenever $\delta_G \ge \tau$ by \Cref{theorem:original-star-contraction}.
\begin{lemma}
    \label{lemma:dynamic-star-contraction}
    There is a fully-dynamic randomized algorithm that, given as input a dynamic graph $G$ and a parameter $\tau>0$, maintains a (possibly incomplete) contraction $G'$ of $G$ as a weighted graph.
    If $\delta_G\ge \tau$ then $G'$ is complete and constitutes a $\tau$-star contraction of $G$ with high probability.
    The algorithm runs in amortized update time $\polylog (n)$ and worst-case update time $\tO(n)$ (regardless of $\delta_G$).
    Furthermore, the algorithm updates at most $O(1)$ edges in $G'$ per edge update in $G$ in expectation and $\tO(n/\tau)$ in the worst-case.
\end{lemma}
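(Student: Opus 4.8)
The plan is to maintain the center set $R$, a parent pointer $\pi(v)$ for each contracted vertex, and the weighted graph $G'$ \emph{explicitly}, and then to show that an eager ``re-root on demand'' strategy is cheap enough on average. I may assume $\tau\ge 800\log n$, so that the sampling probability $p=800\log n/\tau$ of \Cref{theorem:original-star-contraction} is at most $1$; otherwise I set $p=1$, so $R=V$, nothing is ever contracted, $G'=G$ as a unit-weight graph, and all claims are trivial. At initialization I sample $R$ by including each vertex independently with probability $p$. For every vertex $v$ I keep its adjacency list and, as a sub-list that supports uniform sampling and $O(\log n)$-time updates, the set $N_G(v)\cap R$. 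I call a non-center vertex $v$ \emph{active} once $d_G(v)$ first reaches $\tau$, and it stays active forever after. For each active $v$ with $N_G(v)\cap R\neq\emptyset$ I store a parent $\pi(v)$ kept uniformly distributed over the current set $N_G(v)\cap R$ by the standard reservoir rule (inserting a center-neighbor into a set that becomes of size $s$, adopt it as $\pi(v)$ with probability $1/s$; deleting the edge $(v,\pi(v))$, resample $\pi(v)$ uniformly from the rest). Set $\phi(v)=\pi(v)$ for active $v$, $\phi(r)=r$ for $r\in R$, and $\phi(v)=\bot$ otherwise; the maintained graph $G'$ has vertex set $R$, weight $w'(r,r')=|\{(a,b)\in E:\phi(a)=r,\ \phi(b)=r'\}|$ for $r\neq r'$, stored as weighted adjacency lists with $O(\log n)$-time access to any single weight (self-loops discarded).

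\textbf{Handling an update.} On inserting or deleting $(u,v)$, I first update the two adjacency lists, the two center-neighbor lists, and $d_G(u),d_G(v)$ in $O(\log n)$ time. If $u$ (or $v$) thereby becomes active for the first time, I sample $\pi(u)$ and, by scanning $N_G(u)$ and aggregating by $\phi(\cdot)$, insert $u$'s edges into $G'$ in $\tO(d_G(u))$ time, touching at most $2\min(d_G(u),|R|)$ weights. If the reservoir adopts the newly inserted center-neighbor as $\pi(u)$, or the deleted edge was $(u,\pi(u))$, then I ``re-root'' $u$: choose the new parent and move $u$'s bundle of edges in $G'$ from the old parent to the new one, again in $\tO(d_G(u))$ time and $2\min(d_G(u),|R|)$ weight changes (and symmetrically for $v$). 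Otherwise the update changes the single weight $w'(\phi(u),\phi(v))$ by $\pm1$. Since an update touches $O(1)$ vertices, $d_G(\cdot)\le n$, and (with high probability) $|R|=\tO(n/\tau)$, this gives worst-case update time $\tO(n)$ and at most $\tO(n/\tau)$ weight changes of $G'$ per update.

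\textbf{Correctness.} Conditioned on $R$, the reservoir rule keeps each $\pi(v)$ uniform over the current $N_G(v)\cap R$, independently across $v$, so at any moment the law of $G'$ coincides with that of one $\tau$-star contraction of the current graph, provided (i) every non-center vertex of current degree $\ge\tau$ is active and (ii) every active vertex has a center-neighbor. Condition (i) is automatic, since such a vertex reached degree $\tau$ at some earlier time. For (ii), a Chernoff bound shows a fixed vertex of degree $\ge\tau$ has $|N_G(v)\cap R|\ge\tfrac12\tau p=\Omega(\log n)$ except with probability $n^{-\Omega(1)}$; taking the constant in $p$ large and a union bound over all vertices (and, for a polynomial-length update sequence, over all times), with high probability whenever $\delta_G\ge\tau$ every non-center vertex is active with a nonempty center-neighbor set, so $G'$ is a complete $\tau$-star contraction; the same event yields $|R|=\tO(n/\tau)$, hence $G'$ has $\tO(n/\tau)$ vertices.

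\textbf{The amortized bound --- the main obstacle.} It remains to bound the expected total cost over a sequence of $T$ updates. The baseline is $O(T\log n)$. For activations, the $\tau$ edges present when $v$ activates were created by $\tau$ distinct insertions, and each insertion is responsible for at most two activations, so $\sum_v\tO(d_G(v)\text{ at activation})=\sum_v\tO(\tau)\,\mathbf{1}[v\text{ active}]\le\tO(\tau)\cdot(2T/\tau)=\tO(T)$. For re-roots, I fix a deletion of $(v,x)$ with $v\notin R$ active; a re-root of $v$ occurs iff $x=\pi(v)$ and costs $\tO(d_G(v))$. Since the sequence is fixed by an oblivious adversary, $d_G(v)$ and $N_G(v)$ at that moment are deterministic while $\pi(v)$ is uniform over $N_G(v)\cap R$, so the expected cost is at most $\E_R\bigl[\,\mathbf{1}\{x\in N_G(v)\cap R\}/|N_G(v)\cap R|\,\bigr]\cdot\tO(d_G(v))$; as $x\in R$ with probability $p$ and then $|N_G(v)\cap R|=1+\mathrm{Bin}(d-1,p)$ with $d=d_G(v)$, the identity $\E[1/(1+\mathrm{Bin}(m,q))]=(1-(1-q)^{m+1})/((m+1)q)\le 1/((m+1)q)$ reduces this to $p\cdot 1/(dp)=1/d$, so the expected cost of this deletion is $\tO(1)$; the reservoir-adoption case of an insertion is identical. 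Summing over the $\le 2T$ endpoint-events gives expected total re-root cost $\tO(T)$, hence amortized update time $\polylog(n)$, and the same estimate shows each update changes $O(1)$ weights of $G'$ in expectation. I expect the delicate point to be exactly this last argument: an eager re-root of a high-degree vertex costs $\Theta(d_G(v))$, which is affordable only because an oblivious adversary cannot aim a deletion at the hidden, randomly chosen parent edge; this must be paired with the ``activate once, never deactivate'' rule, which prevents an adversary from forcing repeated $\Theta(\tau)$-cost (de)activations by oscillating a vertex's degree around the threshold $\tau$.
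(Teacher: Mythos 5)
Your construction is, at its core, the paper's \Cref{alg:dynamic-star-contraction}: a center set $R$ sampled once with probability $p=800\log n/\tau$, a uniformly distributed center-neighbor $\pi(v)$ maintained per non-center vertex, the contraction stored explicitly as a weighted graph, bulk re-mapping of a vertex's incident edges only when its parent changes, and an oblivious-adversary argument that a parent change at a given update has probability roughly $1/d_G(v)$. Your two substitutions are fine and arguably cleaner: reservoir sampling plays the role of the paper's priority-heap sampler (\Cref{lemma:stable-uniform-dynamic-sampling}), and the exact identity $\E[1/(1+\mathrm{Bin}(m,q))]\le 1/((m+1)q)$ replaces the paper's Chernoff lower bound on $d_R(v)$, giving $\Pr[\pi(v)\text{ changes}]\le 1/d_G(v)$ with no $n^{-10}$ slack. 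Your charging arguments do establish amortized $\polylog(n)$ expected time, worst-case $\tO(n)$ time, and worst-case $\tO(n/\tau)$ edge changes in $G'$, and the correctness/completeness argument for $\delta_G\ge\tau$ is sound.

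The one genuine discrepancy with the statement comes from your activation rule. The lemma asserts that \emph{each} edge update in $G$ causes $O(1)$ edge updates in $G'$ in expectation; in the paper's algorithm every bulk modification of $G'$ is gated by the random event that $R_v$ changes (probability at most about $1/d_G(v)$), so this per-update expectation holds. In your scheme, the update at which $d_G(v)$ first reaches $\tau$ triggers, \emph{deterministically}, the insertion of $v$'s entire bundle into $G'$, touching up to $\min(\tau,|R|)$ distinct weights; e.g.\ for $\tau=\sqrt n$ this is $\approx\sqrt n$ expected edge updates in $G'$ at that single update, so the $O(1)$-in-expectation clause fails there (your argument only gives it amortized), and likewise the expected time of that particular update is $\tO(\tau)$ rather than $\polylog(n)$ (still consistent with the amortized time claim). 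The fix is exactly the paper's design choice: drop the degree-$\tau$ activation gate and maintain $\pi(v)$, and $v$'s mapped edges, for \emph{every} non-center vertex that has a center neighbor, regardless of its degree. This is harmless because the lemma only requires $G'$ to be a $\tau$-star contraction when $\delta_G\ge\tau$, in which case every non-center vertex has degree at least $\tau$ anyway, and it eliminates all deterministic bulk events, so that every $\Theta(d_G(v))$-size change to $G'$ is again protected by the $1/d_G(v)$-probability parent-change event.
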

We prove this lemma in \Cref{sec:dynamic-star-contraction} and proceed now to show that it implies our two main theorems.

\subsection{Proof of Theorem \ref{theorem:main-matching-GHNSTW23}}
We begin by showing how to maintain a maximal $k$-packing of forests in a weighted graph under edge insertions and deletions.
Note that since we interpret the contracted multigraph $G'$ as a weighted graph we need the packing to maintain edge weights.%
\footnote{One can simulate changing the weight of an edge by deleting it and inserting it with the new weight.}

\begin{lemma}
    \label{lemma:forest-packing}
    There exists a fully dynamic algorithm that maintains a maximal $k$-packing of forests of a dynamic weighted graph $G=(V,E)$ on $n$ vertices with update time $\tO(k)$.
    The algorithm is randomized and succeeds with high probability against an oblivious adversary.
\end{lemma}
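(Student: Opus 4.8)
The plan is to maintain $k$ \emph{nested} dynamic spanning forests. View the weighted graph $G$ as a multigraph and, since there are only $k$ forests, cap every edge multiplicity at $k$. Put $G_1:=G$ and, inductively, let $T_i$ be a spanning forest of $G_i$ (so $T_i$ has the same connected components as $G_i$) and let $G_{i+1}$ be $G_i$ with one copy of each edge of $T_i$ deleted; thus $G_1\supseteq G_2\supseteq\cdots$. I first check that $T_1,\ldots,T_k$ is a maximal $k$-packing of forests: the forests are edge-disjoint and each is simple by construction, and if $e\in E\setminus\bigcup_j T_j$ then $e$ is never deleted when passing from $G_i$ to $G_{i+1}$, so $e\in G_i$ for every $i$; since $T_i$ spans $G_i$, the endpoints of $e$ are connected in $T_i$, i.e.\ $T_i\cup\{e\}$ contains a cycle.

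To maintain this structure I would keep, at each level $i\in[k]$, a fully dynamic spanning-forest data structure $D_i$ representing $G_i$ that supports edge insertions and deletions in polylogarithmic worst-case time and, upon deletion of a tree edge, returns a replacement edge or certifies that none exists; such a structure is randomized, which is the source of randomness in the lemma, and the $k$ copies are run with mutually independent random bits. Multi-edges are stored once with a multiplicity counter (a standard extension), so inserting or deleting a weighted edge is a constant number of multiplicity updates per level plus at most one replacement query per level. An insertion of $e$ into $G$ is forwarded to $D_1$; if $e$ becomes a tree edge of $T_1$ we stop, otherwise $e\in G_2$ and we forward it to $D_2$, continuing until it is placed or level $k$ is reached. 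A deletion of $e$ is forwarded to every $D_i$ that currently contains $e$; whenever some $D_j$ reports a replacement $e'$, the edge $e'$ is being promoted into $T_j$ and must therefore leave $G_{j+1}\supseteq G_{j+2}\supseteq\cdots$, so we forward deletions of $e'$ down the levels as long as $e'$ is present, and each time such a deletion removes a tree edge we promote its replacement in the same way. This cascade strictly descends through the levels, and the level-ranges vacated by the successive replacement edges are disjoint, so each update touches each level only $O(1)$ times, for a total of $O(k)$ data-structure operations, i.e.\ $\tO(k)$ time.

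Correctness rests on the invariant that $D_i$ represents exactly $G_i$ and $T_i$ is a spanning forest of it. For an insertion this is immediate (an inserted edge either joins $T_i$, leaving $G_{i+1}$ unchanged, or is passed to $G_{i+1}$). For a deletion one follows the cascade downward: the deleted edge leaves every level that held it, each promoted replacement edge leaves exactly the levels below its new home, and so the invariant is re-established level by level. For the high-probability guarantee against an oblivious adversary, the key point is that the operation sequence presented to $D_j$ depends only on the original update sequence and on the random choices of $D_1,\ldots,D_{j-1}$, not on the random bits of $D_j$ itself: an edge enters $G_j$ precisely because it failed to become a tree edge at some level below $j$, it leaves $G_j$ only because it is deleted from $G$ or promoted to a level below $j$, and a promotion \emph{into} $T_j$ merely flips an edge between tree and non-tree inside $D_j$ without changing $D_j$'s edge set. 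Hence each $D_j$ faces an adversary oblivious to its own randomness; inducting on $j$ and taking a union bound over the polynomially many updates, all $D_j$ are simultaneously correct with high probability, and so is the whole algorithm.

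The step I expect to be the main obstacle is pinning down the deletion cascade so that it is simultaneously fast and correct, especially with edge weights: one must verify that removing a weighted edge---which in the multigraph picture affects several parallel copies spread across consecutive levels---still touches each level only $O(1)$ times and produces at most one genuine replacement per level (parallel copies absorbing the rest for free), and that the ensuing downward cascade always re-establishes the nested-spanning-forest invariant rather than leaving some $T_i$ non-maximal. A secondary delicate point is the composition argument against an oblivious adversary, where care is needed to ensure that $D_j$'s input never depends, even indirectly through promotions, on $D_j$'s own random bits.
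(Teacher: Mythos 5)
Your proposal is correct and follows essentially the same route as the paper: maintain $k$ nested spanning forests $T_i$ of $G\setminus\bigcup_{j<i}T_j$ via a polylog-worst-case dynamic spanning-forest structure (the paper cites \cite{KKM13,GKKT15}, following \cite{ADKKP16}), and argue that each insertion/deletion propagates at most $O(1)$ operations per level, with a replacement edge at level $i$ turning into a single deletion at level $i+1$. Your extra care about the deletion cascade and about each $D_j$'s input being independent of its own randomness only makes explicit what the paper leaves implicit.
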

To prove the lemma we follow the approach of \cite{ADKKP16}, which requires the following result of \cite{KKM13,GKKT15} on maintaining a single spanning forest.
\begin{theorem}[\cite{KKM13,GKKT15}]
    \label{thereom:tree-packing-known}
    There exists a fully dynamic algorithm that maintains a weighted spanning forest $T$ of a dynamic weighted graph $G$ with $n$ vertices, with worst-case update time $O(\log^4 n)$.
    Every edge insertion in $G$ adds at most one edge to $T$.
    Every edge deletion in $G$ removes at most one edge from $T$ and may additionally add one edge to $T$.
    The algorithm is randomized and succeeds with high probability against an oblivious adversary.
\end{theorem}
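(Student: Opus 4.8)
\textbf{Proof plan for Lemma~\ref{lemma:forest-packing}.}
The plan is to build the packing by layering single spanning forests in the spirit of the Nash--Williams forest decomposition, running one independent copy of the data structure of Theorem~\ref{thereom:tree-packing-known} per layer. Concretely, set $G_1 \eqdef G$ and, for $i\ge 1$, maintain a weighted spanning forest $T_i$ of $G_i$ using an independent instance $\mathcal{D}_i$ of the data structure of Theorem~\ref{thereom:tree-packing-known}, and define the next layer $G_{i+1} \eqdef G_i \setminus T_i$ (keeping edge weights). We keep the instances $\mathcal{D}_1,\dots,\mathcal{D}_k$, output $T_1,\dots,T_k$, and simulate a weight change of an edge of $G$ by a deletion followed by an insertion with the new weight.

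First I would verify that $T_1,\dots,T_k$ is a maximal $k$-packing. Edge-disjointness is immediate since $T_i \subseteq G_i = G \setminus (T_1\cup\dots\cup T_{i-1})$, and each $T_i$ is a forest because it is a spanning forest of $G_i$. For maximality, take any $e \in E\setminus\bigcup_j T_j$; since $e\notin T_1\cup\dots\cup T_{i-1}$ we have $e\in G_i$ for every $i\in[k]$, and since $T_i$ is a spanning forest of $G_i$ with $e\notin T_i$, the endpoints of $e$ lie in the same tree of $T_i$, so $T_i\cup\{e\}$ contains a cycle. The caveat about parallel edges is automatic, as a forest has no parallel edges.

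The core of the argument is that a single edge update to $G$ propagates down the layers as one \emph{chain}, touching each $\mathcal{D}_i$ at most once, so the worst-case cost per update is $O(k\log^4 n) = \tO(k)$. On inserting $e$ into $G$, we insert $e$ into $G_1$; by Theorem~\ref{thereom:tree-packing-known} the instance $\mathcal{D}_1$ adds at most one edge to $T_1$, and this edge can only be $e$ itself (it is the unique new edge and the only one able to merge two trees of the spanning forest $T_1$). If $e$ joins $T_1$ then $G_2 = G_1\setminus T_1$ is unchanged and we stop; otherwise $e\in G_2$ and we recurse on the insertion of $e$ into $G_2$. Thus $e$ descends until it first joins some forest (or becomes residual), touching $\le k$ instances. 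On deleting $e$ from $G$, for each layer $i$ up to the layer $j$ with $e\in T_j$ the edge $e$ is a non-tree edge of $G_i$, so its deletion leaves $T_i$ unchanged and just passes the deletion of $e$ to $G_{i+1}$; at layer $j$, instance $\mathcal{D}_j$ removes $e$ and adds at most one replacement $f$, which was a non-tree edge of $G_j$, hence $f\in G_{j+1}$, and promoting $f$ into $T_j$ removes $f$ from $G_{j+1}$, so we recurse on the deletion of $f$ from $G_{j+1}$. Either way, exactly one edge is passed to the next layer, giving a chain through $\le k$ instances, each processed in worst-case $O(\log^4 n)$. The main obstacle, and the point to treat carefully, is precisely this non-branching claim: it relies on an insertion modifying $T_i$ only by possibly adding the inserted edge, and on the deletion of a non-tree edge not modifying $T_i$ at all. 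The first is forced by the semantics of a spanning forest; for the second I would invoke the natural behavior of the data structure of Theorem~\ref{thereom:tree-packing-known} (a non-tree-edge deletion only updates bookkeeping), or note it is without loss of generality since the caller knows whether $e\in T_i$ and can suppress the reconnection step otherwise. If the cascade were allowed to branch, the number of affected layers could grow exponentially in $k$, so this is essential.

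Finally, for the success guarantee, give each $\mathcal{D}_i$ independent random bits $\rho_i$. Since the external adversary is oblivious, $\mathcal{D}_1$ sees an update sequence independent of $\rho_1$ and is correct with high probability; conditioning on $\rho_1,\dots,\rho_{i-1}$ and the correctness of $\mathcal{D}_1,\dots,\mathcal{D}_{i-1}$, the update sequence fed to $\mathcal{D}_i$ is fixed and independent of $\rho_i$, so $\mathcal{D}_i$ is correct with high probability over $\rho_i$. A union bound over the $k\le n$ layers shows all forests are maintained correctly with high probability, and the weighted setting is handled transparently since Theorem~\ref{thereom:tree-packing-known} already maintains a weighted spanning forest; this completes the proof.
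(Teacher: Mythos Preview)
Your proposal is actually for Lemma~\ref{lemma:forest-packing}, not for the stated Theorem~\ref{thereom:tree-packing-known} (which is a cited black-box result from \cite{KKM13,GKKT15} and carries no proof in the paper); treating it as a proof of the lemma, it is correct and follows the same layered approach as the paper, maintaining $T_i$ as a spanning forest of $G\setminus\bigcup_{j<i}T_j$ and arguing that each update in $G$ triggers at most one update per layer. Your write-up is in fact more thorough than the paper's, since you explicitly verify maximality of the packing and justify the oblivious-adversary requirement by induction on the layers, both of which the paper leaves implicit.
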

\begin{proof}[Proof of \Cref{lemma:forest-packing}]
    The algorithm maintains a collection of $k$ spanning trees $\set{T_i}_{i=1}^k$, where each $T_i$ is spanning tree of the graph $G\setminus \cup_{j<i}T_j$, that is maintained using \Cref{thereom:tree-packing-known}.
    To show that the update time is $\tO(k)$, examine some edge update $e$ and level $i$.
    If the update is an insertion, then either $e$ is added to $T_i$ or the update is propagated to level $i+1$, and in both cases at most one change is propagated to level $i+1$.
    If the update is a deletion, then either $e\not\in T_i$ and this update is propagated to level $i+1$, or $e\in T_i$ and then removing it might cause a single other edge $e'$ to be added to $T_i$ from $G\setminus T_{j\le i}$.
    In this case $e'$ is from $G\setminus T_{j\le i}$, and therefore we propagates a deletion of $e'$ to level $i+1$ (however, $T_{i+1}$ is oblivious to the deletion of $e$).
    Therefore, an edge update in $G$ causes at most $O(1)$ changes to each $T_i$ and the overall update time is $\tO(k)$.
\end{proof}
\begin{proof}[Proof of \Cref{theorem:main-matching-GHNSTW23}]
    It suffices to present an algorithm that returns a global minimum cut of $G$ with constant probability, and returns some larger cut otherwise (if the algorithm is only required to return the edge connectivity then it only returns the value of this cut).
    Running $O(\log n)$ independent copies of the algorithm and taking the minimum cut found yields a global minimum cut with high probability.
    The desired algorithm works as follows.
    Begin by running in parallel $r=O(\log n)$ instances of the $\tau$-star contraction algorithm of \Cref{lemma:dynamic-star-contraction} to obtain contracted graphs $\set{G_i'=(V_i',E_i')}_{i=1}^r$, where the $i$-th instance uses $\tau = 2^i$.
    Furthermore, for each instance $G_i'$ maintain a maximal $2^{i+1}$-forest packing $\set{T_j^i}_{j=1}^{2^{i+1}}$ of the contracted graph $G_i'$ using \Cref{lemma:forest-packing}.%
    \footnote{An additional factor $2$ in the size of the packing is needed since $\delta_G/\tau\in[1,2)$.}
    In parallel, track the minimum degree $\delta_G$ of the input graph $G$.

    Let $i^*$ be the maximal such that $\delta_G\ge 2^{i^*}$,
    and let $H=(V_{i^*}',\cup_{j=1}^{2^{i^*+1}}T_j^{i^*})$
    be the graph obtained by taking the union of the maximal $2^{i^*+1}$-forest packing of $G_{i^*}'$.
    The algorithm then finds a minimum cut of $H$ using an offline algorithm for weighted graphs in time $\tO(n)$ and, returns the minimum between the edge connectivity found and $\delta_G$ (and the associated cut).
    The correctness guarantee follows from \Cref{theorem:original-star-contraction} and since the contraction is complete by \Cref{lemma:dynamic-star-contraction} (and hence $G_{i}'$ is a $2^{i}$-star contraction of $G$) whenever $\delta_G \ge \tau$.

    To analyze the time complexity, notice that maintaining each $\tau$-star-contraction instance takes worst-case time $\tO(n)$ per update by \Cref{lemma:dynamic-star-contraction}.
    To analyze the time complexity of maintaining a maximal $2^{i+1}$-forest packing, notice that each edge update in $G$ updates at most $\tO(n/2^{i})$ edges 
    in $G_i'$ by \Cref{lemma:dynamic-star-contraction}.
    Each update in $G_i'$ then takes $\tO(2^{i+1})$ time to update the maximal $2^{i+1}$-forest packing by \Cref{lemma:forest-packing}, hence the total time per edge update in $G$ is $\tO(n)$.
    Observe that the number of edges in each $2^{i+1}$-forest packing is at most $\tO(n)$, since each forest has $O(|V_i'|)=\tO(n/2^{i})$ edges and the packing has $2^{i+1}$ forests, and hence a minimum cut can be found in time $\tO(n)$.
    Finally, observe that tracking the minimum degree $\delta_G$ in $G$ can be done using a heap data structure with worst-case update time $O(\log n)$ per edge update.
    This concludes the proof of \Cref{theorem:main-matching-GHNSTW23}.
\end{proof}

\subsection{Proof of Theorem \ref{theorem:main-better-for-larger}}
At a high level, the proof of \Cref{theorem:main-better-for-larger} is similar to that of \Cref{theorem:main-matching-GHNSTW23}.
Again the algorithm maintains in parallel $r=O(\log n)$ instances of the $\tau$-star contraction algorithm of \Cref{lemma:dynamic-star-contraction}, where the $i$-th instance $G_i'$ uses $\tau = 2^i$.
However, instead of maintaining a maximal $2^{i+1}$-forest packing
of $G_i'$, the algorithm solves the edge-connectivity problem on $G_i'$ directly.

Notice that using \Cref{lemma:dynamic-star-contraction}, the worst-case time per update in $G$ is $\tO(n)$, which is larger than $\tO(n^2/\delta_G^2)$ when $\delta_G = \omega(n^{1/2})$.
To circumvent this, we de-amortize \Cref{lemma:dynamic-star-contraction};
instead of processing immediately all $d_G(v)$ edges incident to $v$ when its chosen center vertex changes, the contraction procedure places these edges in a queue and processes $\tO(n/\delta_G)$ of them after each update in $G$.
This ensures that the worst-case time complexity per update in $G$ is $\tO(n/\delta_G)$.
However, now the contraction $G'$ may be incomplete even when $\delta_G\ge \tau$, if the queue of the contraction procedure is non-empty.
Fortunately, we can show that the probability that the queue is non-empty at any given moment in time is $O(1/\log^2 n)$ as stated next and proven in \Cref{sec:dynamic-star-contraction}.
Notice that the contraction procedure always knows whether its contraction is complete, by checking whether its queue is empty and if $\delta_G\ge \tau$, hence the algorithm can easily ignore incomplete instances.
\begin{lemma}
    \label{lemma:dynamic-lazy-star-contraction}
    There is a fully-dynamic randomized algorithm that, given as input a dynamic graph $G$ and a parameter $\tau>0$, returns a (possibly incomplete) contraction $G'$ of $G$ as a weighted graph.
    The algorithm runs in amortized update time $\polylog (n)$ and worst-case time $\tO(n)$ (regardless of $\delta_G$).
    Finally, if $\delta_G \ge \tau$ then $G'$ is complete and constitutes a $\tau$-star contraction of $G$ with probability $1-O(1/\log^2 n)$.
\end{lemma}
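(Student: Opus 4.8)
The plan is to de-amortize the algorithm of \Cref{lemma:dynamic-star-contraction}. Recall that that algorithm fixes the center set $R$ once and for all (each vertex independently with probability $p=\Theta(\log n/\tau)$), maintains for every vertex $v$ with $d_G(v)\ge\tau$ a center $c(v)$ uniform over $N_G(v)\cap R$, and maintains the weighted graph $G'$ obtained by contracting each such $v$ into $c(v)$ and merging parallel edges. Every operation other than a \emph{rebuild at a vertex} $v$ costs only $\polylog(n)$ per update to $G$; a rebuild at $v$ is triggered either (i) when the star edge $(v,c(v))$ is deleted, forcing a fresh uniform re-sample of $c(v)$, or (ii) when $d_G(v)$ crosses the threshold $\tau$ so that $v$ enters or leaves $H$, and it must re-route all $\le d_G(v)\le n$ edges incident to $v$ through the new center, costing $\tO(d_G(v))=\tO(n)$ in the worst case. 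This is the only obstacle to a good worst-case bound.

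First I would install a FIFO \emph{work queue} of pending rebuild tasks: a task stores a vertex $v$, its freshly sampled new center, and a cursor into the incident-edge list of $v$ recording how far re-routing has progressed, while the not-yet-processed edges of $v$ are marked \emph{pending} and simply absent from $G'$ (so $G'$ is the contraction restricted to the already-routed edges, hence possibly incomplete). After each update to $G$ the algorithm performs the $\polylog(n)$ bookkeeping (update degrees, detect a triggered rebuild, push the task, mark edges pending, reconcile an already-queued task touching the same vertex) and then performs $\Theta((n/\delta_G)\log^{C}n)$ units of queued re-routing, where $\delta_G$ is the current minimum degree — tracked by a heap in $O(\log n)$ per update — and $C$ is a large constant. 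To avoid oscillation across the degree-$\tau$ boundary I would use the standard buffer-zone rule: $v$ becomes active (contracted) once $d_G(v)\ge\tau$ and stays active until $d_G(v)<\tau/2$, so consecutive $H$-transitions of a fixed vertex are $\Omega(\tau)$ updates apart, yet this rule has no effect whenever $\delta_G\ge\tau$, since then every non-center has degree $\ge\tau\ge\tau/2$ and is active. I would also inherit from \Cref{lemma:dynamic-star-contraction} whatever incremental re-sampling of $c(v)$ under insertions keeps it uniform over the current $N_G(v)\cap R$. Granting this, the worst-case update time is $\Theta((n/\delta_G)\log^{C}n)+\polylog(n)=\tO(n)$ regardless of $\delta_G$; the amortized update time remains $\polylog(n)$, as in \Cref{lemma:dynamic-star-contraction}, because queuing only redistributes the re-routing work over time; and when $\delta_G\ge\tau$, \Cref{theorem:original-star-contraction}(1) gives that with high probability $H=V\setminus R$ and all of $V\setminus R$ is contracted, so $G'$ is complete and a genuine $\tau$-star contraction exactly when the work queue is empty. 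It thus remains to show that at any fixed time $t$ with $\delta_G\ge\tau$, the queue is nonempty with probability only $O(1/\log^2 n)$.

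For that estimate I would bound the queue inflow. The work a single update enqueues is at most $\tO(n)$, but from type-(i) rebuilds its expectation is only $\polylog(n)$: for an endpoint $v$ of the updated edge, a type-(i) rebuild is triggered only if the updated edge is exactly $v$'s current star edge, which — since the other endpoint lies in $R$ with probability $p$ and, given that, equals $c(v)$ with probability $\Theta(1/(p\,d_G(v)))$ (using $|N_G(v)\cap R|=\Theta(p\,d_G(v))$ with high probability) — has probability $\Theta(1/d_G(v))$ and contributes expected work $\Theta(1/d_G(v))\cdot\tO(d_G(v))=\polylog(n)$. For type-(ii) rebuilds, the buffer-zone rule together with the hypothesis $\delta_G\ge\tau$ \emph{at the query time $t$} forces any $v$ undergoing an $H$-transition within a window $(t-w,t]$ to have accumulated $\Omega(\tau)$ incident edge-updates inside that window (so that $d_G(v)$ is back at $\ge\tau$ by time $t$), so the window contains $O(w/\tau)$ such transitions, of total cost $O(w/\tau)\cdot\tO(\tau)=\tO(w)$ — at most half the window's drain capacity $w\cdot\Theta((n/\delta_G)\log^{C}n)$ for $C$ large. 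Using the standard representation that the queue is nonempty at $t$ iff the inflow over some window $(t-w,t]$ exceeds that window's drain capacity, and subtracting off the deterministic type-(ii) part, it suffices to bound $\Pr[\,\text{random inflow over }(t-w,t]>\tfrac12 w\,\Theta((n/\delta_G)\log^{C}n)\,]$, which by Markov's inequality is $O(1/\log^{C-O(1)}n)$; a union bound over the $O(\log n)$ dyadic window lengths $w\in\{1,2,4,\dots\}$ (any violating window lies inside a dyadic one of at most twice its length) together with $C$ large enough yields the claimed $O(1/\log^2 n)$.

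I expect the queue-length estimate to be the main obstacle, precisely because the inflow is bursty and heavy-tailed — a single update can cost $\tO(n)$ — and it mixes a random part with a deterministic part from $H$-transitions, so the Markov/dyadic-window argument only goes through after installing the buffer zone and carefully exploiting that the hypothesis $\delta_G\ge\tau$ is imposed at the query time $t$. A secondary point to verify is that neither the buffer zone nor the inherited incremental re-sampling of $c(v)$ disturbs the hypotheses under which \Cref{theorem:original-star-contraction} is applied.
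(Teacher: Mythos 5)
There is a genuine gap, and it comes from a design choice the paper deliberately avoids: your degree threshold with the $[\tau/2,\tau]$ buffer zone. Your queue bound hinges on the claim that any vertex undergoing an $H$-transition inside a window $(t-w,t]$ must accumulate $\Omega(\tau)$ incident updates inside that window, so that the window contains only $O(w/\tau)$ transitions of total cost $\tO(w)$. That is true for a deactivation (degree fell below $\tau/2$ and must climb back to $\ge\tau$ by time $t$), but it is false for an \emph{activation}: a vertex sitting at degree $\tau-1$ just before the window (hence inactive) crosses $\tau$ after a single insertion, yet its activation forces $\tO(\tau)$ re-routing work. Activations are a deterministic function of the (oblivious) update sequence, so an adversary can schedule up to two of them per update: e.g.\ build the graph by insertions so that $\Theta(w)$ vertices first reach degree $\tau$ during $(t-w,t]$ while all other degrees are already $\ge\tau$; at time $t$ we have $\delta_G\ge\tau$, but the deterministic inflow in the window is $\tO(w\tau)$ against a drain of only $w\cdot\tO(n/\delta_G)$. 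For $\tau=\omega(\sqrt{n}\,\polylog n)$ (exactly the regime \Cref{theorem:main-better-for-larger} targets) the queue is then nonempty at time $t$ with probability $1$, so the claimed $1-O(1/\log^2 n)$ completeness guarantee does not follow from your argument. The Markov-plus-dyadic-window machinery for the random (type-(i)) inflow is fine, but the type-(ii) accounting is the step that fails.

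The paper sidesteps this entirely: \Cref{alg:dynamic-star-contraction} has no degree threshold --- every non-center vertex always maintains a stable uniform sample $R_v$ of its neighbors in $R$ and is contracted to it --- so the only expensive events are changes of $R_v$, which by \Cref{claim:dynamic-star-contraction-change-rate} occur with probability $O(\log n/\max\{d_G(v),\tau\})$ per update. With an $(n\log^4 n/\delta_G)$-incremental rebuild budget, any single rebuild finishes within $\delta_G/\log^4 n$ updates, so the queue can be nonempty at time $t$ only if some $R_v$ changed during the last $O(\delta_G/\log^3 n)$ updates; linearity of expectation and Markov give probability $O(1/\log^2 n)$ (\Cref{claim:dynamic-lazy-star-contraction}), and the case $\delta_G<\log^4 n$ is trivial since the per-update budget then exceeds $n$. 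To repair your proof you would either have to drop the threshold (inheriting the paper's scheme, where ``joining'' the contraction is governed by the rare event that $N_G(v)\cap R$ changes the sampled center, not by a deterministic degree crossing), or find a different way to pay for activation bursts, since a drain rate of $\tO(n/\delta_G)$ cannot absorb $\tO(\tau)$ deterministic work per update when $\tau=\omega(\sqrt n)$.
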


\begin{proof}[Proof of \Cref{theorem:main-better-for-larger}]
    Again it suffices to show an algorithm that succeeds with constant probability and returns a larger cut otherwise.
    The algorithm is as follows.
    Begin by running $r= O(\log n)$ independent instances of the $\tau$-star contraction algorithm of \Cref{lemma:dynamic-lazy-star-contraction} to obtain contracted graphs $\set{G_i'}_{i=1}^r$, where the $i$-th instance uses $\tau = 2^i$.
    In parallel, track the minimum degree $\delta_G$ of the input graph $G$.
    Let $i^*$ be the maximal such that $\delta_G \ge 2^{i^*}$.

    If the contraction $G_{i^*}'$ is incomplete, then fail (or return an arbitrary cut).
    Otherwise, find a minimum cut using an offline algorithm for edge connectivity in weighted graphs in time $\tO(n^2/\delta_G^2)$.
    Finally, return the minimum between the edge connectivity of $G_{i^*}'$ and the minimum degree $\delta_G$ (and the associated cut).
    Notice that by \Cref{lemma:dynamic-lazy-star-contraction} if $G_{i^*}'$ is a $2^{i^*}$-star contraction, the edge connectivity of $G_{i^*}'$ is equal to the edge connectivity of $G$ with constant probability.
    Furthermore, $G_{i^*}'$ is complete (and hence a $\tau$-star contraction) with probability at least $1-O(1/\log^2 n)$ by \Cref{lemma:dynamic-lazy-star-contraction}.
    Therefore, using a union bound over the two events the algorithm finds the edge connectivity with constant probability.

    To analyze the time complexity, notice that maintaining a single $\tau$-star contraction instance using \Cref{lemma:dynamic-lazy-star-contraction} takes amortized time $\polylog (n)$ and worst-case time $\tO(n/\delta_G)$ per edge update in $G$.
    Maintaining the minimum degree $\delta_G$ in $G$ can be done using a heap data structure with worst-case update time $O(\log n)$ per edge update.
    Therefore, the time per edge update in $G$ is $\tO(n/\delta_G)$.
    Finally, the query time is determined by the edge-connectivity computation in $G_{i^*}'$ which takes $\tO(n^2/2^{2i^*})=\tO(n^2/\delta_G^2)$ time.
\end{proof}

\section{Dynamic Star Contraction}
\label{sec:dynamic-star-contraction}
In this section we present a fully dynamic algorithm that maintains a $\tau$-star contraction of the graph $G$, proving \Cref{lemma:dynamic-star-contraction,lemma:dynamic-lazy-star-contraction}.
The algorithm for both results is the same, except for minor changes in the edge-update procedure which we detail in the proof of \Cref{lemma:dynamic-lazy-star-contraction}, and presented in \Cref{alg:dynamic-star-contraction}.
To maintain the $\tau$-star contraction, we use a stable dynamic uniform sampler.
The construction of this sampler uses standard techniques and is described in \Cref{lemma:stable-uniform-dynamic-sampling}, whose proof is provided in \Cref{sec:amortized-to-worst-case}.
\begin{definition}[Stable Dynamic Uniform Sampling]
    Given a dynamic set of elements (undergoing insertions and deletions) denoted $S_t$ at time $t$, a \emph{stable uniform dynamic sampler} maintains a sample $x_t \in S_t$ (or reports $\emptyset$ if $S_t = \emptyset$) such that:
    \begin{enumerate}
        \item At every time $t$, each $x \in S_t$ is sampled with probability exactly $1/|S_t|$.
        \item The probability that the sampled element changes after an update is at most $1/|S_t|$.
    \end{enumerate}
\end{definition}
\begin{lemma}
    \label{lemma:stable-uniform-dynamic-sampling}
    There exists a data structure for stable dynamic uniform sampling that uses $O(\log n)$ update time per operation under an oblivious adversary.
    The data structure is randomized and succeeds with high probability.
\end{lemma}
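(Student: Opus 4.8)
The plan is to wrap an order-statistics balanced search tree around the elementary ``resample only when forced'' rule. I would maintain a balanced BST $\mathcal{D}$ holding the elements of the current set $S_t$, augmented so that every node records the size of its subtree; this supports insertion, deletion, and \emph{select-by-rank} in $O(\log n)$ time, and hence a fresh \emph{uniform draw} from $S_t$ by picking a rank uniformly in $\{1,\dots,|S_t|\}$ and calling select. Alongside $\mathcal{D}$ I would store the current sample $x_t$ (or $\emptyset$ when $S_t=\emptyset$), tracked by element identity rather than by a node pointer, so that routine rotations and the successor-swap used in BST deletion cannot silently relocate or alter it. The update rules are deliberately asymmetric: on an insertion of $y$, writing $m\eqdef|S_t|$, if $S_{t-1}=\emptyset$ set $x_t\leftarrow y$, and otherwise, using fresh coins, set $x_t\leftarrow y$ with probability $1/m$ and keep $x_t\leftarrow x_{t-1}$ with probability $1-1/m$; on a deletion of $y$, if $y\neq x_{t-1}$ keep $x_t\leftarrow x_{t-1}$, if $y=x_{t-1}$ and $S_t\neq\emptyset$ set $x_t$ to a fresh uniform draw from $S_t$, and if $y=x_{t-1}$ and $S_t=\emptyset$ set $x_t\leftarrow\emptyset$.

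I would prove correctness by induction on $t$ with the invariant ``$x_t$ is uniform on $S_t$''; because the adversary is oblivious, the sequence $S_0,S_1,\dots$ is fixed in advance, so this marginal invariant is well posed and composes across updates (the base case being the empty initial set). For an insertion with $m\eqdef|S_t|\ge 2$, each old $z\in S_{t-1}$ gets $\Pr[x_t=z]=(1-\tfrac1m)\cdot\tfrac1{m-1}=\tfrac1m$ and $\Pr[x_t=y]=\tfrac1m$, while the cases $S_{t-1}=\emptyset$ and $m=1$ are immediate; for a deletion with $|S_{t-1}|=m+1$ and $S_t\neq\emptyset$, each $z\in S_t$ gets $\Pr[x_t=z]=\Pr[x_{t-1}=z]+\Pr[x_{t-1}=y]\cdot\tfrac1m=\tfrac1{m+1}+\tfrac1{m(m+1)}=\tfrac1m$. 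This establishes Property~1. For Property~2, an insertion moves the sample exactly when the fresh coin selects $y$, which has probability $1/|S_t|$ (or $1$ when $S_{t-1}=\emptyset$, in which case $|S_t|=1$ and the bound is tight), and a deletion moves it exactly when $y=x_{t-1}$, which has probability $1/|S_{t-1}|=1/(|S_t|+1)\le 1/|S_t|$ (the case $S_t=\emptyset$ being vacuous); both bounds are at most $1/|S_t|$, as required.

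For the running time, each update triggers $O(1)$ tree operations (one insert or delete, at most one select-by-rank) together with $O(1)$ maintenance of the sample handle, giving $O(\log n)$ worst-case time per operation; if element lookups are routed through a hash dictionary, or if a randomized balanced tree (e.g.\ a treap) is used in place of a deterministic one, the same bound holds with high probability against an oblivious adversary, which is all the callers need. I expect the only genuine subtlety to be getting the asymmetric insert/delete probabilities exactly right, since the naive variants --- resampling with probability $1/m$ on \emph{every} update, or resampling the whole sample whenever it is touched --- break either exact uniformity or the stability bound, so the induction must be carried out precisely; the remaining work, including the identity-tracking implementation point and the boundary cases $|S_t|\in\{0,1\}$, is routine.
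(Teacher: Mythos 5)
Your proposal is correct, but it proves the lemma by a genuinely different mechanism than the paper. The paper assigns each element, upon insertion, an independent uniform priority in $[1,n^{10}]$, keeps all current elements in a min-heap keyed by priority, and defines the sample to be the minimum-priority element; uniformity and the $1/|S_t|$ stability bound then follow from a one-line symmetry argument (the sample is a deterministic function of the current set and the priorities, so no per-update coins and no induction are needed), at the cost of a collision event that makes the structure fail with probability at most $n^{-10}$ --- which is why the lemma is stated as ``succeeds with high probability.'' You instead maintain the sample explicitly with reservoir-style fresh randomness (adopt a newly inserted element with probability $1/m$; on deleting the current sample, redraw uniformly via rank-select in an order-statistics BST) and verify the marginal-uniformity invariant by induction; your case computations for insertion and deletion are right, and your observation that obliviousness is what makes the marginal invariant compose across updates is exactly the point where both proofs rely on the adversary model. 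What your route buys: with a deterministic balanced BST the update time is worst-case deterministic $O(\log n)$ and there is no failure event at all, so you get a slightly stronger statement than the lemma; you also get exact change probabilities ($1/|S_t|$ on insert, $1/(|S_t|+1)$ on delete). What the paper's route buys: it avoids the asymmetric coin logic and the induction entirely, needs only a heap rather than rank-select, and its correctness is immune to the kind of off-by-one errors you correctly flag as the delicate part of your argument. Either construction satisfies the two properties the rest of the paper actually uses (uniformity of $R_v$ and the $O(1/d_R(v))$ change probability in Claim 3.5), so your proof is a valid drop-in replacement.
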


Throughout this section denote each edge update by the pair $\set{(u,v),s}$, where $s\in \set{\pm 1}$ indicates whether the edge $(u,v)$ is inserted or deleted.
Note that \Cref{alg:dynamic-star-contraction} excludes from $G'$ all the vertices in $V\setminus R$ that are not contracted to any vertex in $R$.
The proof of \Cref{lemma:dynamic-star-contraction} follows immediately from the following two lemmas, which prove the correctness and time complexity of \Cref{alg:dynamic-star-contraction} respectively.

\begin{algorithm}[htbp]
    \caption{Dynamic Star Contraction}
    \label{alg:dynamic-star-contraction}
    \begin{algorithmic}[1]
    \State \textbf{Input:} An unweighted multigraph $G=(V,E)$, parameter $\tau > 0$
    \State \textbf{Output:} A contracted graph $G'$
    \Procedure{Dynamic-Star-Contraction}{$G,\tau$}
        \State $R\gets $ sample every $v\in V$ with probability $p=800\log n /\tau$
        \State for each $v\in V\setminus R$ let $R_v \gets $ initialize stable uniform dynamic sampler  \Comment{the vertex $r\in R$ to which $v$ is contracted}
        \State for each $v\in R$ let $R_v \gets v$ \Comment{by convention, since $v\in R$ is never contracted}
        \State $G' \gets (R,\emptyset)$
        \For{each edge update $\set{(u,v),s}$}
            \State if $s=1$ then $E\gets E \cup \set{(u,v)}$ else $E\gets E\setminus \set{(u,v)}$
            \State if $R_v,R_u$ are both not null then $w(R_u,R_v) \gets w(R_u,R_v)+s$%
            \label{lst:line:both-in-R-or-both-not}
            \State if $u,v \in R$ or $u,v \not \in R$ then \textbf{continue} \label{lst:line:one-in-R}
            \State if $u\not\in R$ then swap $u,v$ \Comment{$|\set{u,v}\cap R|=1$ by line~\ref{lst:line:one-in-R}}
            \State add/remove $u$ to/from $R_v$ \Comment{update the stable uniform dynamic sampler for $v$}%
            \label{lst:line:update-R-sampler}
            \If{$R_v$ has changed}
                \State $R_v^{\text{old}} \gets $ old $R_v$, $R_v^{\text{new}} \gets $ new $R_v$
                \For{each $w\in R$} \label{lst:line:for-each-w-in-R}
                    \State $E_{vw} \gets \set{x\in V\setminus R \mid (v,x)\in E, R_x=w}$
                    \State if $R_v^{\text{old}}$ is not null then $w(R_v^{\text{old}},w) \gets w(R_v^{\text{old}},w)-|E_{vw}|$
                    \State if $R_v^{\text{new}}$ is not null then $w(R_v^{\text{new}},w) \gets w(R_v^{\text{new}},w)+|E_{vw}|$
                \EndFor
            \EndIf
        \EndFor
    \EndProcedure
    \end{algorithmic}
\end{algorithm}
\begin{lemma}[Correctness]
    \label{lemma:dynamic-star-contraction-correctness}
    Whenever $\delta_G\ge \tau$, the output of \Cref{alg:dynamic-star-contraction} with parameter $\tau$ is a complete contraction and constitutes a $\tau$-star contraction of a given dynamic graph $G$.
\end{lemma}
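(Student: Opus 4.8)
The plan is to prove that \Cref{alg:dynamic-star-contraction} maintains, throughout the update sequence, a structural invariant stating that $G'$ is exactly the contracted graph determined by the current center set $R$ and the current center assignment $v\mapsto R_v$, together with the fact that this assignment is distributed as in the static $\tau$-star procedure. Once the invariant is in place the lemma follows easily: when $\delta_G\ge\tau$ every vertex has degree at least $\tau$, so the set $H=\set{v\in V\setminus R:d_G(v)\ge\tau}$ from \Cref{theorem:original-star-contraction} is all of $V\setminus R$; hence the static $\tau$-star contraction contracts precisely the non-centers that have a center neighbour, each into a uniformly random such neighbour, which is exactly what the invariant says $G'$ encodes. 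Completeness then comes from part~1 of \Cref{theorem:original-star-contraction}: when $\delta_G\ge\tau$, with high probability every $v\in V\setminus R$ has $N_G(v)\cap R\neq\emptyset$, so no $R_v$ is null and every edge of $G$ is mapped to an edge of $G'$.

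Concretely, I would maintain after each prefix of updates, with $G=(V,E)$ the current graph, the invariant: (i) $R$ is the set drawn once at initialization; (ii) for every $v\in V\setminus R$, $R_v$ is the current output of a stable dynamic uniform sampler over $N_G(v)\cap R$, so by \Cref{lemma:stable-uniform-dynamic-sampling} it equals the null value when $N_G(v)\cap R=\emptyset$ and is uniform on $N_G(v)\cap R$ otherwise (with $R_v=v$ for $v\in R$ by convention); and (iii) for all $a,b\in R$ the stored weight $w(a,b)$ equals the number of edges $(x,y)\in E$ whose images are $\set{a,b}$, where the image of $z$ is $\psi(z)\eqdef z$ for $z\in R$ and $\psi(z)\eqdef R_z$ for $z\in V\setminus R$, and edges with a null-imaged endpoint are uncounted. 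The base case is trivial since initially $E=\emptyset$, every sampler is empty, and $G'=(R,\emptyset)$ with all weights zero.

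For the inductive step I would process one update $\set{(u,v),s}$ and case on $\set{u,v}\cap R$. If $\set{u,v}\subseteq R$ or $\set{u,v}\cap R=\emptyset$, then no sampler's underlying set changes, hence no $R_z$ and no $\psi(z)$ changes, and the only adjustment to part~(iii) is that the single edge $(u,v)$ is added or removed; its contribution to $w(\psi(u),\psi(v))$ changes by exactly $s$, which is precisely what line~\ref{lst:line:both-in-R-or-both-not} does before the algorithm reaches the ``continue'' on line~\ref{lst:line:one-in-R} (in the first sub-case $\psi(u)=u,\psi(v)=v$; in the second both images are $R_u,R_v$, and line~\ref{lst:line:both-in-R-or-both-not} fires iff neither is null, matching the rule that edges with a null endpoint are uncounted). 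The remaining case is that exactly one endpoint is a center; after the swap we may assume $u\in R,v\notin R$, so only $v$'s sampler is affected and line~\ref{lst:line:update-R-sampler} preserves part~(ii) for $v$. If the value $R_v$ is unchanged, then $\psi$ is unchanged everywhere, $R_v$ is necessarily non-null (for a deletion the edge $(u,v)$ itself exhibits a center neighbour of $v$; for an insertion a null value would be forced to change), and line~\ref{lst:line:both-in-R-or-both-not} makes the single correct $\pm1$ update to $w(u,R_v)$. If instead $R_v$ changes, from $R_v^{\text{old}}$ to $R_v^{\text{new}}$ (one of which may be null at the boundary where $v$ gains or loses its only center neighbour), then $\psi(v)$ changes and every edge incident to $v$ must be re-attributed from $w(R_v^{\text{old}},\psi(x))$ to $w(R_v^{\text{new}},\psi(x))$ over all neighbours $x$ of $v$ (reading $\psi(r)=r$ for $r\in R$), on top of inserting/removing $(u,v)$; I would verify that the combined effect of line~\ref{lst:line:both-in-R-or-both-not}, which uses the \emph{pre-update} value $R_v^{\text{old}}$ since it runs before line~\ref{lst:line:update-R-sampler}, and of the loop beginning at line~\ref{lst:line:for-each-w-in-R}, which for each $w\in R$ transfers the total multiplicity of edges from $v$ to vertices imaged at $w$ from $w(R_v^{\text{old}},w)$ to $w(R_v^{\text{new}},w)$, realizes exactly this re-attribution-plus-edge-change — the key point being that for $w=u$ the loop subtracts back the contribution that line~\ref{lst:line:both-in-R-or-both-not} attributed to the stale value $R_v^{\text{old}}$ and re-installs it at $R_v^{\text{new}}$. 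The ``if $\ldots$ is not null'' guards in the loop exactly correspond to not counting edges incident to a null-imaged endpoint.

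I expect the only real obstacle to be this last sub-case — the weight bookkeeping when the sampled center $R_v$ changes — where one must track three things simultaneously: the order in which line~\ref{lst:line:both-in-R-or-both-not} (acting on the old $R_v$) and the re-mapping loop (acting on the new $R_v$) fire; the fact that edges from $v$ to \emph{other} center vertices must be re-mapped just like edges to non-centers, so the sets in the loop are read with the convention $\psi(r)=r$ for $r\in R$; and the boundary cases where $R_v^{\text{old}}$ or $R_v^{\text{new}}$ is null. Everything else is routine: once the invariant holds at all times, part~(iii) gives that $G'$ is the correct contraction, part~(ii) together with property~1 of the stable sampler gives that the center assignment has exactly the distribution of the static procedure, and the argument of the first paragraph (using $H=V\setminus R$ when $\delta_G\ge\tau$, and part~1 of \Cref{theorem:original-star-contraction} for completeness) yields the claimed statement.
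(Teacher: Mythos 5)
Your proposal is correct and follows essentially the same route as the paper: property~1 of the stable uniform sampler gives that the pair $(R,R_v)$ is distributed exactly as in the static $\tau$-star contraction, and completeness when $\delta_G\ge\tau$ follows from part~1 of \Cref{theorem:original-star-contraction}, since once every vertex of $V\setminus R$ is contracted all edges are mapped into $G'$. The only difference is that you additionally verify the weight-maintenance invariant of \Cref{alg:dynamic-star-contraction} case by case (correctly observing that the sets $E_{vw}$ in the re-attribution loop must be read as also covering $v$'s neighbours inside $R$), a bookkeeping step the paper's proof treats as immediate from the algorithm.
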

\begin{proof}
    We begin by showing that if every edge of $G$ is mapped to an edge in $G'$, then the output is a $\tau$-star contraction.
    Notice that the vertices in $R$ are sampled uniformly at random with probability $p=800\log n /\tau$.
    By the properties of the stable uniform dynamic sampling, we have that for every vertex $v\in V\setminus R$, the vertex $R_v$ is sampled uniformly at random from the neighbors of $v$ in $R$.
    Therefore, once every edge of $G$ is mapped to an edge in $G'$, the output is a $\tau$-star contraction.
    We now show that the output of the algorithm is complete.
    Whenever $\tau\ge \delta_G$ by \Cref{theorem:original-star-contraction}, we have that every vertex $v\in V\setminus R$ is contracted with high probability.
    In addition, notice that once every vertex $v\in V\setminus R$ is contracted, all edges of $G$ have been mapped to edges in $G'$.
    Therefore the output of the algorithm is complete.
\end{proof}
\begin{lemma}[Time Complexity]
    \label{lemma:dynamic-star-contraction-time}
    Given an unweighted graph $G=(V,E)$ on $n$ vertices, \Cref{alg:dynamic-star-contraction} with parameter $\tau$ requires amortized time $\tO(1)$ (and worst case time $\tO(n)$) per edge.
    Furthermore, in expectation the algorithm updates at most $O(1)$ edges in $G'$ per edge update in $G$ and in the worst-case it updates $\tO(n/\tau)$ edges in $G'$ per edge update in $G$.
\end{lemma}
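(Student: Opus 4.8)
The plan is to charge the cost of one edge update $\set{(u,v),s}$ against three sources of work, analyze each, and assemble the four claimed bounds. Inspecting \Cref{alg:dynamic-star-contraction}, an update consists of: (i) bookkeeping on $E$ and a single insertion/removal of a center into/from the stable uniform sampler $R_v$, costing $O(\log n)$ by \Cref{lemma:stable-uniform-dynamic-sampling}; (ii) the single weight update on the edge $(R_u,R_v)$ of $G'$, which touches at most one edge of $G'$; and (iii) the re-routing loop, executed only when the returned sample $R_v$ has changed, which iterates over $w\in R$, recomputes the buckets $E_{vw}$, and performs two weight changes per nonempty bucket. Parts (i) and (ii) cost $O(\log n)$ and at most one $G'$-edge deterministically, so the entire lemma reduces to bounding the cost of part (iii) and how often it runs.

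For the worst-case statements I would first note that $|R|$ is a sum of $n$ independent indicators of mean $p=800\log n/\tau$, so $\E|R|=np=\tO(n/\tau)$ and, by a Chernoff bound together with the trivial bound $|R|\le n$, we have $|R|=\tO(n/\tau)$ and $|R|\le n$ with high probability; I condition on this. One execution of the re-routing loop builds all buckets $E_{vw}$ by a single scan over the at most $d_G(v)\le n$ neighbors of $v$ (grouping them by their current center) and then iterates over the at most $|R|$ values of $w$, so it runs in time $\tO(|R|+d_G(v))=\tO(n)$; since the rest of the update is $O(\log n)$, the worst-case update time is $\tO(n)$. That same loop changes at most two weights for each $w\in R$, hence modifies at most $2|R|=\tO(n/\tau)$ edges of $G'$, and adding the one edge from part (ii) gives the worst-case bound of $\tO(n/\tau)$ modified $G'$-edges per update.

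For the amortized and expected statements, the plan hinges on the fact that the re-routing loop runs only when the sample $R_v$ genuinely changes. This requires an update inserting or deleting an edge between $v$ and a center of $R$ --- at most one such per edge update in $G$ --- and, by the second property of a stable uniform dynamic sampler (\Cref{lemma:stable-uniform-dynamic-sampling}), it changes the sample with probability at most $1/|N_G(v)\cap R|$ over the sampler's internal randomness, which is independent of the request sequence since the adversary is oblivious. I would then run a charging argument: when $R_v$ changes, spread the $\tO(d_G(v))$ time (and the at most $2\min(d_G(v),|R|)$ modified $G'$-edges) of the re-routing over the $|N_G(v)\cap R|$ edges currently joining $v$ to $R$; each such edge was itself introduced by an insertion, since $R$ is fixed at initialization, and it absorbs a charge of order $d_G(v)/|N_G(v)\cap R|$ per change of $R_v$, paid against the $1/|N_G(v)\cap R|$ probability that a given center-edge update of $v$ triggers a change. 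Summing over a sequence of $m$ updates is designed to give total expected running time $\tO(m)$, i.e.\ amortized $\tO(1)$ per update, and total expected number of $G'$-edge modifications $O(m)$, i.e.\ $O(1)$ per update in expectation; the correctness of the maintained weights is the separate invariant argument of \Cref{lemma:dynamic-star-contraction-correctness}.

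The step I expect to be the main obstacle is making this charging robust across all ranges of $|N_G(v)\cap R|$. When $|N_G(v)\cap R|$ is large the sampler is very stable and the bound is immediate, but when it is small a single toggle of $v$'s unique center-edge already forces a full $\tO(d_G(v))$ re-routing, so the accounting must exploit that in that regime $\min(d_G(v),|R|)$ is comparatively small (it is always at most $|R|=\tO(n/\tau)$) and that $|N_G(v)\cap R|$ concentrates around $d_G(v)p$ whenever $d_G(v)$ is not tiny, so that the two $1/|N_G(v)\cap R|$ factors jointly dominate the $d_G(v)$ blow-up. Gluing these regimes together without resorting to a lossy worst-case union is the technical heart of the lemma.
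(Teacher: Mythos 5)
Your worst-case bounds (loop cost $\tO(|R|+d_G(v))=\tO(n)$, at most $2|R|=\tO(n/\tau)$ modified edges of $G'$) are fine and essentially the paper's argument. The gap is in the amortized/expected part, and it is exactly the step you flag as the ``technical heart.'' Your accounting treats every update as potentially a center-edge update (``at most one such per edge update in $G$'') and then uses only the sampler's stability $1/|N_G(v)\cap R|$ together with the concentration $|N_G(v)\cap R|\approx p\,d_G(v)$. That combination is too weak: conditioned on the update actually touching a center, the expected re-routing cost is about $d_G(v)/d_R(v)\approx 1/p=\tau/(800\log n)$, which can be polynomial in $n$. The ingredient you never use is that, for an \emph{arbitrary} update $\set{(u,v),s}$, the other endpoint lies in $R$ only with probability $p=800\log n/\tau$ over the (adversary-oblivious) choice of $R$. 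This is precisely the paper's \Cref{claim:dynamic-star-contraction-change-rate}: $\Probability{R_v\text{ changes}}=\Probability{u\in R}\cdot\Probability{R_v\text{ changes}\mid u\in R}\le p\cdot\bigl(2/(p\,d_G(v))+n^{-10}\bigr)=O(1/d_G(v))$, after which the expected re-routing time is $O(1/d_G(v))\cdot O(d_G(v))=O(1)$ per update and the expected number of modified $G'$-edges is $O(1)$ per update, with no charging over the sequence needed at all.

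Your proposed repair --- spreading the $\tO(d_G(v))$ cost of a change over the current center edges and invoking ``two $1/|N_G(v)\cap R|$ factors'' --- does not recover the missing factor $p$. With your deterministic bound on the number of center-edge updates, the total charge over $m$ updates incident to a vertex $v$ comes out to roughly $m\cdot d_G(v)/d_R(v)^2\approx m/(p^2 d_G(v))=m\,\tau^2/(\log^2 n\cdot d_G(v))$, which for $d_G(v)=\Theta(\tau)$ and $\tau=n^{\Omega(1)}$ is polynomially larger than $\tO(m)$; the calculation only closes if you multiply by the expected $p$-fraction of updates whose other endpoint is a center, i.e., if you reintroduce $\Probability{u\in R}=p$, at which point the charging machinery is superfluous. (A secondary soft spot: charging to the \emph{insertion} of each center edge does not bound the number of changes occurring during that edge's lifetime, so the per-edge absorbed charge is not bounded either.) The fix is to drop the charging argument and prove the per-update statement directly as in \Cref{claim:dynamic-star-contraction-change-rate}, combining (a) $\Probability{u\in R}=p$, (b) stability $1/d_R(v)$ from \Cref{lemma:stable-uniform-dynamic-sampling}, and (c) the Chernoff bound $d_R(v)\ge p\,d_G(v)/2$ with high probability.
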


\subsection{Amortized Time Complexity}
The proof of \Cref{lemma:dynamic-star-contraction-time} relies on the following claim.
\begin{claim}
    \label{claim:dynamic-star-contraction-change-rate}
    For each edge update $\{(u,v),s\}$, with probability at least $1-800\log n / \max \left\{ d_G(v), \tau \right\}$ both $R_v,R_u$ remain the unchanged.
\end{claim}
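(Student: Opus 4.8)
The plan is to identify the unique circumstance in which $R_u$ or $R_v$ can change after the update $\{(u,v),s\}$, and then to bound its probability carefully, the crucial point being to retain the factor coming from the event that the updated edge is incident to a center vertex.

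First I would record the following structural observations. A center $w\in R$ always has $R_w=w$, so it never changes; and for a non-center $w$ the value $R_w$ is a uniform sample from $N_G(w)\cap R$, which the update alters only if the \emph{other} endpoint of the updated edge lies in $R$. Hence the event ``both $R_v,R_u$ unchanged'' can fail only when exactly one endpoint is a center and the sampler of the other (non-center) endpoint changes; by the swap performed in \Cref{alg:dynamic-star-contraction} this non-center endpoint is $v$, so $R_u=u$ is fixed and it suffices to prove $\Pr[R_v\text{ changes}]\le 800\log n/\max\{d_G(v),\tau\}$.

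Next, condition on the event $\mathcal E=\{v\notin R,\ u\in R\}$, which is the only event under which $R_v$ can change. Let $m$ be the number of neighbours of $v$ at the moment the sampler $R_v$ is updated; one checks $m\in\{d_G(v),d_G(v)+1\}$, so $m\ge d_G(v)$ and $m\ge 1$. Since membership in $R$ is independent across vertices, conditioned on $\mathcal E$ the set $S$ that $R_v$ samples from consists of $u$ together with an independent $\mathrm{Bin}(m-1,p)$-sized subset of the remaining neighbours, i.e. $|S|=1+Y$ with $Y\sim\mathrm{Bin}(m-1,p)$. Using the stability property of \Cref{lemma:stable-uniform-dynamic-sampling} together with the standard identity $\E[1/(1+\mathrm{Bin}(m-1,p))]=(1-(1-p)^m)/(mp)$,
\[
\Pr[R_v\text{ changes}\mid\mathcal E]\ \le\ \E\!\left[\frac{1}{|S|}\right]\ =\ \frac{1-(1-p)^{m}}{mp}.
\]
Removing the conditioning and using $\Pr[\mathcal E]\le\Pr[u\in R]=p$ yields $\Pr[R_v\text{ changes}]\le p\cdot(1-(1-p)^m)/(mp)=(1-(1-p)^m)/m$, and since $1-(1-p)^m\le\min\{1,mp\}$ (Bernoulli's inequality) this is at most $\min\{p,1/m\}$.

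Finally I would conclude with a two-case argument, recalling that $p=800\log n/\tau$. If $d_G(v)\ge\tau$ then $\max\{d_G(v),\tau\}=d_G(v)$ and $\min\{p,1/m\}\le 1/m\le 1/d_G(v)\le 800\log n/d_G(v)$; if $d_G(v)<\tau$ then $\max\{d_G(v),\tau\}=\tau$ and $\min\{p,1/m\}\le p=800\log n/\tau$. In both cases $\Pr[R_v\text{ changes}]\le 800\log n/\max\{d_G(v),\tau\}$, which is the claim. The step I expect to be the main obstacle is the passage from the conditional to the unconditional bound: conditioned on $\mathcal E$ one can only assert $\Pr[R_v\text{ changes}\mid\mathcal E]\le\min\{1,1/(mp)\}$, which is too weak once $\tau\gg\log^2 n$; it is exactly the extra factor $\Pr[\mathcal E]\le p$ — gained by \emph{not} conditioning on the edge being incident to a center — that makes the final bound come out to $800\log n/\max\{d_G(v),\tau\}$.
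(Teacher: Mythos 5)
Your proof is correct, and it shares the paper's skeleton — observe that a sampler can only change when the updated edge has exactly one endpoint in $R$, pull out the factor $\Pr[u\in R]=p$, and then bound the conditional change probability via the stability guarantee of \Cref{lemma:stable-uniform-dynamic-sampling} — but it handles the key estimate by a genuinely different route. The paper bounds $\Pr[R_v\text{ changes}\mid u\in R]=1/d_R(v)$ by showing, via Chernoff (\Cref{theorem:chernoff}), that $d_R(v)\ge p\,d_G(v)/2$ with probability $1-n^{-10}$; this concentration step is exactly why the paper needs the hypothesis $d_G(v)\ge\tau$ (so that $p\,d_G(v)\ge 800\log n$) and why it treats the two regimes of $\max\{d_G(v),\tau\}$ separately. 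You instead note that, conditioned on $\{u\in R,\ v\notin R\}$, the sampler's ground set has size $1+\mathrm{Bin}(m-1,p)$ by independence of the Bernoulli memberships, and use the exact identity $\E\bigl[1/(1+\mathrm{Bin}(m-1,p))\bigr]=\bigl(1-(1-p)^m\bigr)/(mp)$, so no concentration inequality is needed, the failure probability $n^{-10}$ never appears, and you in fact get the slightly stronger bound $\min\{800\log n/\tau,\,1/d_G(v)\}$, of which the claimed $800\log n/\max\{d_G(v),\tau\}$ is a weakening. Your version also quietly fixes a small imprecision of the paper (applying the bound to the post-update sampler size $m\in\{d_G(v),d_G(v)+1\}$ rather than to $d_G(v)$ directly), while sharing its harmless ``without loss of generality'' relabeling of which endpoint is the non-center; both conventions are fine for how the claim is consumed in \Cref{claim:dynamic-lazy-star-contraction} and \Cref{lemma:dynamic-star-contraction-time}, where only a bound in terms of the minimum degree is used.
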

\begin{proof}[Proof of Claim \ref{claim:dynamic-star-contraction-change-rate}]
    To bound the probability $R_v$ changes, we may assume without loss of generality that $v\not\in R$ (since if $u,v\in R$ then $R_v,R_u$ do not change).
    Notice that under this assumption, $R_u$ does not change, since $v\not\in R$ and so $R_u$ is not updated.
    Observe that $R_v$ can change only if $u\in R$, hence
    \begin{equation}
        \label{eq:dynamic-star-contraction-change-rate}
        \Probability{R_v \text{ changes}} 
        = 
        \Probability{u\in R} \cdot \Probability{R_v \text{ changes} \mid u\in R}
        = p \cdot\Probability{R_v \text{ changes} \mid u\in R}
        ,
    \end{equation}
    where the last equality is since $R$ is obtained by sampling each vertex independently with probability $p=800\log n /\tau$.
    Hence the probability that $R_v$ changes is at most $800\log n /\tau$.
    We now show a stronger bound on the probability that $R_v$ changes when $d_G(v) \ge \tau$.
    Notice that $\Probability{R_v \text{ changes} \mid u\in R}=1/d_R(v)$ by \Cref{lemma:stable-uniform-dynamic-sampling}, where $d_R(v)$ is the number of neighbors of $v$ in $R$.
    To bound $d_R(v)$, notice that $\Exp{d_R(v) \mid u\in R}{R} \ge \Exp{d_R(v)\in R}{R} = pd_G(v)$, where the inequality is since conditioning on the event that $u\in R$ increases the expected number of neighbors of $v$ in $R$ and the equality is since $R$ is sampled uniformly at random from $V$.
    Therefore, by Chernoff's bound (\Cref{theorem:chernoff}),
    \begin{equation*}
        \Probability{d_R(v) \leq p d_G(v)/2} 
        \leq 2\exp\left(-\frac{pd_G(v)}{12}\right)
        \leq 2\exp\left(-\frac{800\log n}{12}\right)
        \leq n^{-10}
        ,
    \end{equation*}
    where the second inequality is from $d_G(v)\ge \tau$.
    Hence, with probability at least $1-n^{-10}$ we have that $d_R(v) \geq p d_G(v)/2$.
    Plugging this back into \Cref{eq:dynamic-star-contraction-change-rate},
    \begin{equation*}
        \Probability{R_v \text{ changes}} 
        = 
        \Probability{u\in R} \cdot \Probability{R_v \text{ changes} \mid u\in R}
        \leq 
        p \cdot \left( \frac{2}{p d_G(v)}+n^{-10} \right)
        \le 
        \frac{800 \log n}{d_G(v)}
        .
    \end{equation*}
    Therefore, the probability that both $R_v,R_u$ remain unchanged is at least $1-800\log n / \max \left\{ d_G(v), \tau \right\}$, as claimed.
\end{proof}
\begin{proof}[Proof of \Cref{lemma:dynamic-star-contraction-time}]
    Observe that if $R_v,R_u$ do not change, then updating the weight of the edge $(R_u,R_v)$ in line~\ref{lst:line:both-in-R-or-both-not} takes $O(1)$ time and updating the samplers for $u,v$ in line~\ref{lst:line:update-R-sampler} takes $O(\log n)$ time by \Cref{lemma:stable-uniform-dynamic-sampling}.
    They cannot both change since that implies that both $u,v\not\in R$, so assume exactly one of them changes and without loss of generality it is $R_v$.
    In this case, in line~\ref{lst:line:for-each-w-in-R} the algorithm updates in $G'$ the weight of all edges corresponding to edges incident to $v$ in $G$.

    To minimize the number of times we update the edges of $G'$, we go over all edges incident to $v$ in $G$ to find the size of the sets $E_{vw}$, for every $w\in R$.%
    \footnote{Observe that it is simpler to iterate over all the edges incident to $v$ in $G$ and update the weights of the edges in $G'$ during each iteration. However, this creates $d_G(v)$ edge updates in $G'$ which then takes $O(d_G(v) n/\delta_G)=\omega(n)$ time (if $d_G(v)>\delta_G$) to update the tree packing in $G'$. In contrast, our approach only updates the edges of $G'$ at most $\tO(n/\tau)$ times per update in $G$.}
    This requires $O(d_G(v))$ time, since we need to iterate over all edges incident to $v$.
    Then, we update the weights of the edges $(R_v^{\text{old}},w)$ and $(R_v^{\text{new}},w)$ in $G'$ in $O(1)$ time for every $w\in R$, which requires $O(\min\set{|R|,d_G(v)})\le O(d_G(v))$ time overall, since at most $d_G(v)$ sets $E_{uw}$ can be non-empty.
    Therefore, by \Cref{claim:dynamic-star-contraction-change-rate}, the expected time spent on this operation is $O(\log n)$.
    In the worst case each update requires at most $O(d_G(v))$ time, which is $O(n)$ per edge.

    To bound the number of changed edges in $G'$, notice that if $R_v$ does not change then at most one edge is updated in $G'$.
    Otherwise, the number of edges updated in $G'$ is at most $\max{2d_G(v),\tO(n/\tau)}$, since each vertex in $G'$ has at most $\tO(n/\tau)$ neighbors by the size of $R$.
    Combining the above with \Cref{claim:dynamic-star-contraction-change-rate} we find that the expected number of edges updated in $G'$ per edge update in $G$ is $O(1)$.
    By the same bound presented above, the worst-case number of edges whose weight is updated in line~\ref{lst:line:for-each-w-in-R} per edge update in $G$ is $\tO(n/\tau)$.
\end{proof}

\subsection{From Amortized to Worst-Case Time Complexity via Incremental Rebuilding}
\label{sec:amortized-to-worst-case}
In this section we prove \Cref{lemma:dynamic-lazy-star-contraction}, bounding the worst-case time complexity of \Cref{alg:dynamic-star-contraction} with $(n\log^4 n/\delta_G(v))$-incremental rebuilding.
Notice that the incremental rebuilding completes the update of the star contraction $G'$ after $\delta_G/\log^4 n$ edge updates in $G$.
Therefore, the probability that the queue of the algorithm is not empty (and hence the contraction is incomplete) at any given time $t$ is bounded by the probability that some vertex changes its representative $R_v$ in the last $\delta_G/\log^3 n$ edge updates.
The following lemma shows that this probability is $O(1/\log^2 n)$ whenever $\delta_G \ge \log^4 n$.
Otherwise, if $\delta_G<\log^4 n$, then the algorithm finishes processing its queue after a single edge update since $n\log^4 n/\delta_G \ge n$, and its output is complete (as long as $\delta_G\ge\tau$).
\begin{claim}
    \label{claim:dynamic-lazy-star-contraction}
    If at a given time $t$, the minimum degree of $G$ is $\delta>\log^4 n$, then given a subsequent sequence of updates of length $N\eqdef O(\delta/\log^3 n)$  the probability that at least one vertex $v\in V$ changes its center vertex $R_v$ during the sequence is at most $O(1/\log^2 n)$.
\end{claim}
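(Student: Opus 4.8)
The plan is to control each individual update in the sequence with \Cref{claim:dynamic-star-contraction-change-rate} and then take a union bound over the $N$ updates.

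First I would record that the minimum degree does not drop much during the sequence: each update changes the degree of at most two vertices by $\pm 1$, so after $i\le N$ updates every vertex still has degree at least $\delta-N$. Since $N=O(\delta/\log^3 n)$, for $n$ large enough $N\le\delta/2$, and hence every vertex has degree at least $\delta/2$ at every point during the sequence.

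Next, fix an update in the sequence and let $A$ denote the event that processing it changes the center vertex $R_w$ of some $w\in V$. Only the two endpoints of the update can be affected: a center vertex $w\in R$ satisfies $R_w=w$ permanently, and the sampler $R_w$ of a vertex not incident to the update is untouched. I would stress that \Cref{claim:dynamic-star-contraction-change-rate} is a statement about a single, arbitrary update: its proof uses only that $R$ is a fixed set drawn independently of the (oblivious) update sequence, together with the per-update guarantee of the stable sampler from \Cref{lemma:stable-uniform-dynamic-sampling}, which holds at every update regardless of the past. Hence it applies to the fixed update at hand, and since both endpoints have degree at least $\delta/2$ at that moment it gives
\[
  \Pr[A]\ \le\ \frac{800\log n}{\max\{\delta/2,\tau\}}\ \le\ \frac{1600\log n}{\delta}.
\]

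Finally, the event that some vertex changes its center during the sequence is the union of these events $A$ over the $N=O(\delta/\log^3 n)$ updates, so by the union bound its probability is at most
\[
  N\cdot\frac{1600\log n}{\delta}\ =\ O\!\left(\frac{\delta}{\log^3 n}\right)\cdot\frac{1600\log n}{\delta}\ =\ O\!\left(\frac{1}{\log^2 n}\right),
\]
as claimed. I do not anticipate a real obstacle here: the only point requiring care is the justification that \Cref{claim:dynamic-star-contraction-change-rate} may be invoked at an arbitrary point of the update sequence rather than only from a fresh start, which is legitimate because the center set is sampled once and independently of the updates, and because the stable-sampler bound is a per-step guarantee conditional on the entire history.
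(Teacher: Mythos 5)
Your proof is correct and follows essentially the same route as the paper: observe that the minimum degree stays at least $\delta/2$ throughout the short sequence, bound the per-update change probability by $O(\log n/\delta)$ via \Cref{claim:dynamic-star-contraction-change-rate}, and sum over the $N$ updates (the paper phrases this as Markov's inequality applied to a sum of indicators, which is the same computation as your union bound). Your extra remark justifying that the per-update bound applies at an arbitrary point of the sequence is a fine, if implicit-in-the-paper, clarification.
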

\begin{proof}
    Note that the minimum degree of $G$ after the sequence of updates is at least $(1-o(1))\cdot \delta\ge \delta/2$.
    Let $I_i$ be the indicator for the event that there exists $v\in V\setminus R$ such that $R_v$ changes during the $i$-th edge update.
    Notice that $\Exp{I_i}{} \leq 800\log n/(\delta/2)$ by \Cref{claim:dynamic-star-contraction-change-rate} and that $\delta_G \ge \delta/2$ throughout the sequence of updates.
    Therefore, $\Exp{\sum_{i=1}^{N} I_i}{}\le 1600 N \log n/\max\set{\delta/2,\tau}\le O(1/\log^2 n)$.
    To conclude, by Markov's inequality we have that
    \begin{equation*}
        \Probability{\sum_{i=1}^{N} I_i \geq 1} 
        \leq 
        \frac{\Exp{\sum_{i=1}^{N} I_i}{}}{1} 
        \leq 
        O\left(\frac{1}{\log^2 n}\right)
        .
    \end{equation*}
\end{proof}

\subsection{Stable Dynamic Uniform Sampling}
In this section we prove \Cref{lemma:stable-uniform-dynamic-sampling}, it is based on a standard techniques (see e.g. \cite{Luby86,BDM02,BNR02,DLT07}) and we include it for completeness.
\begin{proof}[Proof of \Cref{lemma:stable-uniform-dynamic-sampling}]
    As each element comes, assign it a priority by uniformly sampling an integer in $[1,n^{10}]$, if at any time two elements have the same priority, the algorithm fails.
    Then, maintain a heap of all the items in the sequence, ordered by their priority.
    When an element is deleted, remove it from the heap.
    The minimum element of the heap is the sampled item.
    It is known that a dynamic heap can be maintained in $O(\log n)$ worst-case time per operation, where $n$ is the number of elements in the heap.

    We now prove the two properties of the sampling.
    Notice that the minimum priority element is distributed uniformly, and hence each element has a probability of $1/|S_t|$ to be sampled at time $t$.
    To show the stability, notice that under a deletion the probability that the element deleted is the minimum element is $1/|S_t|$.
    If we are given an insertion, notice that again by symmetry we have that the probability that the new element is sampled is $1/|S_t|$.
    Therefore, we obtain the desired properties of the sampling.
    Finally, notice that the algorithm fails with probability at most $n^{-10}$.
\end{proof}

{\small
  \bibliographystyle{alphaurl}
  \bibliography{bibliography}
} %

\appendix

\section{Concentration Inequalities}
\label{sec:preliminaries}
\begin{theorem}\label[theorem]{theorem:chernoff}
  Let $X_1,\ldots,X_m\in [0,a]$ be independent random variables.
    For any $\delta \in [0,1]$ and $\mu \geq \mathbb{E}\left[\sum_{i=1}^{m}X_i\right]$, we have
        \begin{align}
            \nonumber
            \mathbb{P}\left[
                \left|
                    \sum_{i=1}^{m}X_i - \mathbb{E}\left[\sum_{i=1}^{m}X_i\right]
                \right|
                \geq \delta \mu
            \right]
            \leq
            2\exp
            \left(
                -\frac{\delta^2\mu}{3a}
            \right)
            .
        \end{align}
\end{theorem}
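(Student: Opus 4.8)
The statement is the standard multiplicative Chernoff bound, with the mild twist that deviations are measured against an arbitrary upper bound $\mu \ge \E[\sum_i X_i]$ rather than against the mean itself. The plan is to use the exponential moment method, bounding the upper and lower tails separately and combining them with a union bound (this is where the factor $2$ comes from). First I would reduce to the normalized case $a = 1$: putting $Y_i \eqdef X_i/a \in [0,1]$, the event $\{\,|\sum_i X_i - \E[\sum_i X_i]| \ge \delta\mu\,\}$ coincides with $\{\,|\sum_i Y_i - \E[\sum_i Y_i]| \ge \delta(\mu/a)\,\}$ and $\mu/a \ge \E[\sum_i Y_i]$, so it suffices to prove the bound $2\exp(-\delta^2\mu/3)$ for variables in $[0,1]$. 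Write $S \eqdef \sum_i Y_i$ and $\nu \eqdef \E[S] \le \mu$.

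For the upper tail I would apply Markov's inequality to $e^{tS}$ with a parameter $t \ge 0$, getting $\Probability{S \ge \nu + \delta\mu} \le e^{-t(\nu + \delta\mu)}\prod_i \E[e^{tY_i}]$. The key estimate is that convexity of $x \mapsto e^{tx}$ on $[0,1]$ gives $e^{tY_i} \le 1 + (e^t-1)Y_i$ pointwise, hence $\E[e^{tY_i}] \le 1 + (e^t-1)\E[Y_i] \le \exp\bigl((e^t-1)\E[Y_i]\bigr)$ and so $\prod_i \E[e^{tY_i}] \le \exp\bigl((e^t-1)\nu\bigr)$. The exponent becomes $(e^t-1)\nu - t(\nu+\delta\mu) = \nu(e^t-1-t) - t\delta\mu$, and since $e^t - 1 - t \ge 0$ and $\nu \le \mu$ this is at most $\mu(e^t - 1 - t - t\delta)$. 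Taking $t = \ln(1+\delta)$ yields $\Probability{S \ge \nu+\delta\mu} \le \exp\bigl(-\mu\bigl((1+\delta)\ln(1+\delta) - \delta\bigr)\bigr)$. The lower tail is handled symmetrically: Markov applied to $e^{-tS}$ with $t \ge 0$, together with $\E[e^{-tY_i}] \le \exp\bigl((e^{-t}-1)\E[Y_i]\bigr)$ and $e^{-t} - 1 + t \ge 0$, gives with $t = -\ln(1-\delta)$ the bound $\Probability{S \le \nu - \delta\mu} \le \exp\bigl(-\mu\bigl((1-\delta)\ln(1-\delta) + \delta\bigr)\bigr)$ (the case $\delta\mu > \nu$ is trivial since $S \ge 0$, and $\delta = 1$ is a harmless limit).

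It then remains to prove the two elementary inequalities $(1+\delta)\ln(1+\delta) - \delta \ge \delta^2/3$ and $(1-\delta)\ln(1-\delta) + \delta \ge \delta^2/3$ for all $\delta \in [0,1]$, after which both tail bounds become $\exp(-\delta^2\mu/3)$ and a union bound finishes the proof. I expect these inequalities to be the only real (and still minor) obstacle. The second is easy: $k(\delta) \eqdef (1-\delta)\ln(1-\delta) + \delta - \delta^2/2$ satisfies $k(0) = k'(0) = 0$ and $k''(\delta) = \delta/(1-\delta) \ge 0$, so $k \ge 0$ (giving in fact the stronger bound $\delta^2/2$). The first needs a little care, since the naive second-derivative comparison $1/(1+\delta) \ge 2/3$ only holds on $[0,1/2]$; I would split: for $h(\delta) \eqdef (1+\delta)\ln(1+\delta) - \delta - \delta^2/3$ one has $h(0) = h'(0) = 0$ and $h'' \ge 0$ on $[0,1/2]$, so $h \ge 0$ there, while on $[1/2,1]$ the function $h$ is concave with positive endpoint values $h(1/2) > 0$ and $h(1) = 2\ln 2 - 4/3 > 0$, hence $h \ge 0$ on all of $[0,1]$. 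Alternatively one could invoke the known Bennett-type estimate $(1+\delta)\ln(1+\delta) - \delta \ge \delta^2/(2 + 2\delta/3) \ge 3\delta^2/8 \ge \delta^2/3$ for $\delta \le 1$.
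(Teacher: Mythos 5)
The paper offers no proof to compare against: \Cref{theorem:chernoff} is quoted in the appendix as a standard Chernoff--Hoeffding bound and used as a black box. Judged on its own, your argument is correct and complete. The rescaling to $a=1$ is consistent with the target bound $2\exp(-\delta^2\mu/(3a))=2\exp(-\delta^2(\mu/a)/3)$; the MGF estimates via convexity ($e^{ty}\le 1+(e^t-1)y$ on $[0,1]$) are standard; and you correctly identify the one point where the hypothesis $\mu\ge\mathbb{E}[\sum_i X_i]$ (rather than equality) matters, namely that replacing $\nu$ by $\mu$ in the exponent is legitimate because the coefficients $e^t-1-t$ and $e^{-t}-1+t$ are nonnegative. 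The closing elementary inequalities are also handled correctly: $(1-\delta)\ln(1-\delta)+\delta\ge\delta^2/2$ by the second-derivative argument, and $(1+\delta)\ln(1+\delta)-\delta\ge\delta^2/3$ on $[0,1]$ via convexity on $[0,1/2]$ plus concavity with positive endpoint values $h(1/2)>0$ and $h(1)=2\ln 2-4/3>0$ on $[1/2,1]$ (or the Bennett-type bound $\delta^2/(2+2\delta/3)\ge 3\delta^2/8$). The union bound then gives the stated factor $2$, so the proof stands as a valid self-contained justification of the inequality the paper cites without proof.
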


\end{document}